\newtheorem{problem}{Problem}
\newtheorem{prop}{Proposition}
\title{
 Minimum-fuel Spacecraft Rendezvous based on Sparsity Promoting Optimization}
\author{Vrushabh Zinage\footnote{Graduate Student, Department of Aerospace Engineering and Engineering Mechanics, Student Member AIAA} and Efstathios Bakolas \footnote{Associate Professor, Department of Aerospace Engineering and Engineering Mechanics email: bakolas@austin.utexas.edu Senior Member AIAA}}
\affil{Department of Aerospace Engineering and Engineering Mechanics, University of Texas at Austin, Austin, Texas 78712}
\begin{document}

\maketitle

 \begin{abstract}
In this paper, we consider the classical spacecraft rendezvous problem in which the so-called active spacecraft has to approach the target spacecraft which is moving in an elliptical orbit around a planet by using the minimum possible amount of fuel. Instead of using standard convex optimization tools which can be computationally expensive, we use modified versions of the Iteratively Reweighted Least Squares (IRLS) algorithm from compressive sensing to compute sparse optimal control sequences which minimize the fuel consumption for both thrust vectoring and orthogonal vectoring (active) spacecraft. Numerical simulations are performed to verify the efficacy of our approach. 
 \end{abstract}

 \section*{Nomenclature}


 {\renewcommand\arraystretch{1.0}
 \noindent\begin{longtable*}{@{}l @{\quad=\quad} l@{}}
 $(x,y,z)$  & relative position of the active chaser spacecraft w.r.t the target spacecraft \\
 $\boldsymbol{u}$  & control input $[u_x\;u_y\;u_z]^\mathrm{T}$ of the active chaser spacecraft in the LVLH frame\\
 $e$ &    eccentricity of the target elliptical orbit \\
 $\nu$& true anomaly \\
 $\omega$ & orbit rate of the target \\
  $h$ & angular momentum of the target orbit \\
  $\mu$ & gravity constant \\
   $r$ & distance of active chaser spacecraft from the target spacecraft \\
   $\gamma$ & $\mu/h^{3/2}$ \\
 $a$ & semi-major axis of target orbit  \\
 $s$ & $\rho\text{sin}(\nu)$ \\
 $c$ & $\rho\text{cos}(\nu)$ \\
 $(\cdot)^{\mathrm{T}}$ &  transpose of matrix\\
 $\rho$ &  1+$e$cos($\nu$) \\
 $\nu_o$ &  initial true anomaly of active chaser spacecraft\\
 $\nu_f$ &  final true anomaly of active chaser spacecraft\\
 $\Phi$ &  state transition matrix\\
 $[0,N-1]_d$ & $\{0,1, \dots, N-1\}$ {(discrete interval from $0$ to $N-1$)}\\
 $\mathbf{I}_n$ & $n\times n$ identity matrix
 \end{longtable*}}

\section{Introduction}

This paper deals with the computation of (approximations of)  minimum-fuel control manoeuvres for spacecraft proximity operations based on sparsity-promoting optimization.
A major challenge in space proximity operations is to achieve autonomy for spacecraft with limited computational resources performing far-range rendezvous along an elliptical orbit while ensuring minimum fuel consumption. The far-range rendezvous is an orbital transfer between an active chaser spacecraft and a passive target spacecraft, with specified initial and final positions and velocities, over a fixed time period. This class of spacecraft maneuvers have played a key role in various space missions such as Vostok, Gemini and Apollo, \cite{chamberlin1964gemini_1,burton1966gemini_2,goodman2006history} for on-orbit satellite servicing and formation flight. In all these applications, minimizing fuel consumption is of prime importance because the amount of propellant carried by the spacecraft is severely limited. The main motivation of this work is to provide solutions which do not require the use of sophisticated and computationally expensive methods but instead, they are easily implementable by non-experts and have minimal hardware and software requirements. {To this aim, we propose the utilization of algorithms from compressive (or compressed) sensing (CS) which are  computationally efficient and easily implementable. CS is an active field of research at present that is attracting considerable interest (primarily in the signal processing community) and is widely used in signal transmission, compression, and recovery. Tools from compressive sensing can also be used to generate sparse optimal control input signals as shown in \cite{sparse_bakolas2019computation}.}

{\textit{Literature review:}} Spacecraft maneuvering can be  mainly operated in two control modes, namely, \textit{orthogonal vectoring} and \textit{thrust vectoring} \cite{leomanni2019sum}. The ability to generate thrust in any direction, that is, along yaw, roll and pitch is called \textit{thrust vectoring}. In thrust vectoring, there is a single movable thruster which is controlled using reaction wheels via attitude control. Using thrust vectoring leads to spacecraft with reduced mass and also   allows for volume savings in the thrusters. For these reasons, thrust vectoring is particularly suitable for the next generation of nano and micro spacecrafts. In the case of \textit{orthogonal vectoring}, there are three fixed thrusters along the three coordinate axes of the Local-Vertical-Local-Horizontal (LVLH) frame. It turns out that in the case of orthogonal vectoring, fuel consumption is directly proportional to the $\ell_1$ norm of the control sequence and the $\ell_2/\ell_1$ norm in the case of thrust vectoring~\cite{leomanni2019sum,zinage2021far}. The authors of \cite{massioni2011matching_pursuit} use matching pursuit and orthogonal matching pursuit algorithms to generate (approximations) of the sparsest control sequences (that is, control sequences comprised of the smallest possible number of non-zero elements) which will keep the output tracking error within certain bounds for a given reference signal. The approach in \cite{massioni2011matching_pursuit} requires that a reference trajectory is known and in addition, the terminal time is assumed to be free.
Numerical methods that are based on the primer vector theory are presented in \cite{lion1968primer_calculus_of_variations,arzelier2013new_conv4,prussing1969illustration}. The algorithms proposed in \cite{lion1968primer_calculus_of_variations} use a penalty minimization step to find the optimal number of impulses required to generate a smooth optimal trajectory. The algorithms proposed in \cite{arzelier2013new_conv4} rely on variational methods combined with polynomial optimization tools, whose complexity and computational cost, however, make them hard to apply in practical problems. Other sophisticated numerical techniques for minimum-fuel trajectory optimization based on the solution of minimum-$\ell_1$ norm problems are proposed in \cite{prussing1969optimal_conv2,carter1995linearized_conv3,arzelier2013new_conv4}. However, \cite{lion1968primer_calculus_of_variations,prussing1969optimal_conv2,carter1995linearized_conv3,arzelier2013new_conv4} may not always offer convergence guarantees, their computationally cost can be high and are not easily implementable by the non-expert. References \cite{ping_lu2013autonomous_cone_2,liu2013robust,ping_liu2014solving,p:acikmese2017} consider more general and challenging proximity operation problems under realistic constraints. The solution approaches proposed in these references offer convergence guarantees but rely on sophisticated optimization tools (e.g., interior-point methods) which have considerable cost and are not easily implementable by the non-expert.

Various algorithms have been proposed for space proximity operation problems in the literature \cite{huber_karlgaard2006robust_huber,adpative_singla2006adaptive,d_amico_2013noncooperative,neural_youmans1998neural,gao2009multi_h_infinity,yao2010flyaround_relative,luo2007optimal}. In particular, \cite{huber_karlgaard2006robust_huber} proposes a Huber filter approach to the spacecraft rendezvous problem using radar based navigation, whereas adaptive control methods for docking and rendezvous problems are proposed in \cite{adpative_singla2006adaptive}. In \cite{d_amico_2013noncooperative}, spacecraft proximity operations are performed using Global Positioning System and Optical Navigation (ARGON). Neural network based controllers \cite{neural_youmans1998neural} and multi-objective robust $H_{\infty}$ control \cite{gao2009multi_h_infinity} have been proposed in the literature for spacecraft rendezvous on a circular orbit. Algorithms which are based on a relative orbit elements for spacecraft rendezvous problems in which the target vehicle is assumed to be either cooperative or non-cooperative are proposed in~\cite{yao2010flyaround_relative}. A multi-objective optimization approach to the linearized impulsive rendezvous problem is proposed in \cite{luo2007optimal}. However \cite{huber_karlgaard2006robust_huber,adpative_singla2006adaptive,neural_youmans1998neural,gao2009multi_h_infinity,yao2010flyaround_relative,luo2007optimal} are computationally expensive and do not guarantee any optimality in terms of fuel consumption.

{\textit{Contributions:}} It is well known \cite{prussing1969optimal_conv2} that, when the dynamics of the rendezvous problem can be approximated by an autonomous or non-autonomous system of linear differential equations, minimum-fuel problems can be formulated as convex programs. Thus, one can address this class of problems by utilizing standard convex optimization techniques \cite{boyd2004convex,grant2014cvx_boyd}. However, in many space proximity operations, the spacecraft may have very limited computational resources and therefore, the control algorithms executed on-board such vehicles should be robust and have a small computational cost. In this paper, computationally light-weight algorithms that solve the minimum-fuel rendezvous problem along elliptical orbits by computing control sequences with minimum $\ell_2/\ell_1$ norm (for the thrust vectoring case) and minimum $\ell_1$ norm solutions (for the orthogonal vectoring case) are proposed. The latter solutions are computed by means of two modified versions of the Iteratively Reweighted Least Squares (IRLS) algorithm, an iterative algorithm from  compressive (or compressed) sensing~\cite{b:Foucart2013} which generates a sequence of minimizers of corresponding quadratic programs which are computationally tractable. The latter sequence converges to the minimizer of the original problem. The use of the IRLS algorithm is motivated by the fact that it can be implemented easily without requiring the use of specialized software. Furthermore, the family of IRLS algorithms are known to be robust and efficient \cite{robust_efficient_irls_2010iteratively,candes2005decoding_irls} and in addition, the solutions they generate have desired sparsity properties (sparsity promoting optimization). In particular, the IRLS algorithm allows one to compute impulsive-like, yet continuous and bounded, approximations to the required thrust inputs. To the best knowledge of the authors, this is the first paper which uses tools from compressive sensing to solve the minimum-fuel rendezvous problem for both thrust vectoring and orthogonal vectoring spacecraft.

{\textit{Structure of the Note:}} The rest of this paper is organized as follows. In Section \ref{sec:state_space_model}, the state space model of the spacecraft rendezvous is introduced. The formulation of the minimum-fuel spacecraft rendezvous problem is given in Section \ref{sec:problem}. Section \ref{sec:algorithms} introduces the modified IRLS algorithms to generate approximate minimum $\ell_2/\ell_1$ and $\ell_1$ control sequences that solve the rendezvous problem. The applicability of the proposed design is investigated through numerical simulations in Section \ref{sec:simulation_results}, and some {concluding remarks are discussed in} Section \ref{sec:conclusions}.

\section{State Space Model}\label{sec:state_space_model}
\begin{figure}[h]
\centering
\includegraphics[width=13cm]{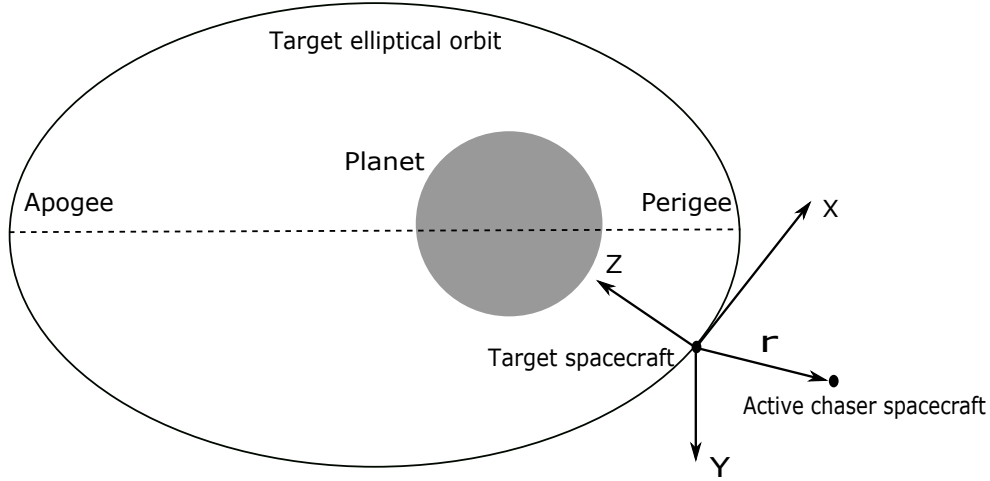}
\caption{Local-Vertical-Local-Horizontal (LVLH) coordinate system for spacecraft rendezvous}
\label{fig:coorinate_frame}
\end{figure}
Consider a target passive spacecraft moving in an elliptical orbit whose eccentricity is $e$. Let the moving Local-Vertical-Local-Horizontal (LVLH) frame be located at the center of gravity of this passive spacecraft. The relative dynamics of the active chaser spacecraft in the LVLH frame is given by \cite{yamanaka2002new_dynamics_state_trans},
  \begin{align}
  & \rho x^{\prime \prime} =2 \rho z^{\prime}-2z e \sin \nu  +2 e x^{\prime} \sin \nu +e x\cos \nu, \label{eqn:x_dynamics}\\ 
  & \rho y^{\prime \prime} =-y+2 e y^{\prime}\sin \nu, \label{eqn:y_dynamics}\\
  & \rho z^{\prime \prime} =-2 \rho x^{\prime}+2 ex \sin \nu  +2 ez^{\prime} \sin \nu +(3+e \cos \nu) z .\label{eqn:z_dynamics} 
\end{align}
where $(\cdot)'$ denotes the derivative with respect to the true anomaly $\nu$, $\rho=1+e\text{cos}(\nu)$, and $[x\;y\;z]^{\mathrm{T}}$ and $[x'\;y'\;z']^{\mathrm{T}}$ denote, respectively, the relative position and its derivative with respect to true anomaly of the active chaser in the $X,$ $Y$ and $Z$ axes 
of the Local-Vertical-Local-Horizontal (LVLH) frame as shown in Fig. \ref{fig:coorinate_frame}. Let us consider the following non-autonomous state transformation:
\begin{equation}
    [\tilde{x}\;\tilde{y}\;\tilde{z}]^{\mathrm{T}} = \rho \; [{x}\;{y}\;{z}]^{\mathrm{T}}.
    \label{eqn:change_of_coord}
\end{equation}
Now taking derivative with respect to the {true anomaly $\nu$,} Eqn.~\eqref{eqn:change_of_coord} becomes 
\begin{align}
    [\tilde{x}'\;\tilde{y}'\;\tilde{z}']^{\mathrm{T}} = \rho \; [{x}'\;{y}' \;{z}']^{\mathrm{T}}+\rho'[x\;y\;z]^\mathrm{T}.
    \label{eqn:change_of_coord_derivative}
\end{align}
Since $ [{x}\;{y}\;{z}]^{\mathrm{T}}=[\tilde{x}\;\tilde{y}\;\tilde{z}]^{\mathrm{T}}/\rho$, Eqn.~\eqref{eqn:change_of_coord} becomes
\begin{align}
    (\rho-\rho')[\tilde{x}'\;\tilde{y}'\;\tilde{z}']^{\mathrm{T}} = {\rho^2} \; [{x}'\;{y}' \;{z}']^{\mathrm{T}}.
    \label{eqn:transformation}
\end{align}
If the equation $\rho-\rho'= 0$ does not have a solution, then the transformation in Eqn. \eqref{eqn:transformation} is well defined (this is the case when $0\leq e<1/\sqrt{2}$). 
 Then, Eqns. \eqref{eqn:x_dynamics}-\eqref{eqn:z_dynamics} reduce to 
\begin{align}
    & \tilde{x}^{\prime \prime}=2 \tilde{z}^{\prime},\\
    & \tilde{y}^{\prime \prime}=-\tilde{y},\\
& \tilde{z}^{\prime \prime}=3 \tilde{z} / \rho-2 \tilde{x}^{\prime}.
\end{align}
 In the expression for the relative motion given in Eqns. (\ref{eqn:x_dynamics})-(\ref{eqn:z_dynamics}),  the true anomaly $\nu$ is used as the independent variable instead of time $t$. 
 Note that the {dynamics in the $y$-axis is decoupled from the dynamics in the $x-z$ plane}. The relative motion of the active chaser spacecraft with respect to the passive spacecraft can then be described by the following {non-autonomous} state space model \cite{yamanaka2002new_dynamics_state_trans}:
\begin{align}
    \boldsymbol{x}'(\nu)=A_c(\nu)\boldsymbol{x}(\nu)+B_c(\nu)\boldsymbol{u}(\nu),
    \label{eqn:continous_state_space_model}
\end{align}
where $\boldsymbol{x}:=[\tilde{x}\;\tilde{y}\;\;\tilde{z}\;\tilde{x}'\;\tilde{y}'\;\tilde{z}']^{\mathrm{T}}$ and

\begin{align}\label{eq:csspace}
    & A_c(\nu)=\left[\begin{array}{cccccc}0 & 0 & 0 & 1 & 0 & 0 \\ 0 & 0 & 0 & 0 & 1 & 0 \\ 0 & 0 & 0 & 0 & 0 & 1 \\ 0 & 0 & 0 & 0 & 0 & 2 \\ 0 & -1 & 0 & 0 & 0 & 0 \\ 0 & 0 & 3 /(1+e\text{cos}(\nu)) & -2 & 0 & 0\end{array}\right], & B_c(\nu)=\frac{1}{\gamma^3\rho^4}\left[\begin{array}{c}\mathbb{O}_{3 \times 3} \\ \mathbb{I}_{3}\end{array}\right].
\end{align}
Note that $A_c(\nu)$ and $B_c(\nu)$ are periodic matrix-valued functions with period $2\pi$. In practical applications, the initial conditions are given in terms of $[{x}\;{y}\;{z}\;\dot{x}\;\dot{y}\;\dot{z}]^{\mathrm{T}}$, which can be associated with the state vector $\boldsymbol{x}:=[\tilde{x}\;\tilde{y}\;\;\tilde{z}\;\tilde{x}'\;\tilde{y}'\;\tilde{z}']^{\mathrm{T}}$ via the following transformation:
\begin{align}
   [\tilde{x}\;\tilde{y}\;\;\tilde{z}\;\tilde{x}'\;\tilde{y}'\;\tilde{z}']^{\mathrm{T}}=L [{x}\;{y}\;\;{z}\;\dot{x}\;\dot{y}\;\dot{z}]^{\mathrm{T}},
\end{align}
where $L$ is given by
\begin{align}
    L=\left[\begin{array}{llllll}\rho & 0 & 0 & 0 & 0 & 0 \\0 & \rho & 0 & 0 & 0 & 0 \\ 0 & 0 &\rho & 0 &0 & 0 \\ \rho^{\prime} &0 & 0 & \frac{\rho}{\omega} & 0 &0 \\0 & \rho' & 0 & 0 & \frac{\rho}{\omega} & 0 \\ 0 &0 & \rho^{\prime} & 0 &0 &\frac{\rho}{\omega}\end{array}\right].
    \label{tansformation matrix}
\end{align}
The system in \eqref{eqn:continous_state_space_model} can be described (approximately) by the following non-autonomous discrete state space model.
\begin{align}
    \boldsymbol{x}(k+1)=A(k)\boldsymbol{x}(k)+B(k)\boldsymbol{u}(k), \quad\quad  {k\in[0,N-1]_d},
    \label{eqn:linear_system_discrete}
\end{align}
where $k$ is the stage and $\nu_k=\nu_o+\frac{(\nu_f-\nu_o)}{N}k$ is the true anomaly at stage $k$, for $k\in[0,N-1]_d$, and {the matrices $A(k)$ and $B(k)$ are defined as, respectively,
\begin{align}
    & {A}(k)=\Phi(\nu_{k+1},\nu_{k}),\\
    & {B}(k)=\int_{\nu_{k}}^{\nu_{k+1}}\Phi(\nu_{k+1},\sigma){B}_c d\sigma,
\end{align}
where $\Phi$ is the state transition matrix of the continuous state space model~\eqref{eq:csspace},  and $\alpha=\frac{(\nu_f-\nu_o)}{N}$ is the sampling period. Since the matrix $A_c$ that appears in the non-autonomous continuous state space model given in Eqn.~\eqref{eqn:continous_state_space_model} depends on $\nu$, the corresponding state transition matrix $\Phi$ does not admit an analytic expression and will have to be approximated numerically.
To this aim, we have}
\begin{align}
    \Phi(\nu,\nu_0)=\Phi_{\nu} \Phi_{\nu_{0}}^{-1},
    \label{eqn:state_transition_matrix}
\end{align}
where the matrices $\Phi_{\nu} $ and $\Phi_{\nu_{0}}^{-1}$ are given by
\begin{align}
\Phi_{\nu}=\left[\begin{array}{cccccc}1 & 0 & -c(1+1 / \rho) & s(1+1 / \rho) & 0 & 3 \rho^{2} J(\nu) \\0 & c/\rho & 0 & 0 & s/\rho & 0\\ 0 & 0 & s & c & 0 & (2-3 e s J(\nu)) \\ 0 & 0 & 2 s & 2 c-e & 0 & 3(1-2 e s J(\nu)) \\0 & -s/\rho & 0 & 0 & c/\rho & 0 \\ 0 & 0 & s^{\prime} & c^{\prime} & 0 & -3 e\left(s^{\prime} J(\nu)+s / \rho^{2}\right)\end{array}\right],
\label{eqn:phi_nu}
\end{align}
\begin{align}
  \Phi_{\nu_{0}}^{-1}=\frac{1}{1-e^{2}} &\left[\begin{array}{cccccc}1-e^{2} & 0 & 3 e(s / \rho)(1+1 / \rho) & -e s(1+1 / \rho) & 0 & -e c+2 \\ 0 & c(1-e^2)/\rho & 0 & 0 & s(1-e^2)/\rho & 0\\0 & 0 & -3(s / \rho)\left(1+e^{2} / \rho\right) & s(1+1 / \rho) & 0 & c-2 e \\ 0 & 0 & -3(c / \rho+e) & c(1+1 / \rho)+e & 0 & -s \\0 & -s(1-e^2)/\rho & 0 & 0 & c(1-e^2)/\rho & 0 \\ 0 & 0 & 3 \rho+e^{2}-1 & -\rho^{2} & 0 & e s\end{array}\right]_{\nu=\nu_{0}},
\end{align}
and $\rho=1+e\text{cos}(\nu)$, $s=\rho\text{sin}(\nu)$, $c=\rho\text{cos}(\nu)$ and $J$ in Eqn. \eqref{eqn:phi_nu} is given by 
\begin{align}
  J(\nu)=\int_{\nu_{0}}^{\nu} \frac{1}{\rho(\tau)^{2}} \mathrm{d} \tau.
\end{align}
Since $J$ does not have an analytical expression, it can be characterized numerically. From Eqn.~\eqref{eqn:state_transition_matrix}
\begin{align}
    \Phi(\nu_k,\nu_{k-1})=\Phi_{{\nu_{{k}}}} \Phi_{\nu_{k-1}}^{-1},
    \label{eqn:state_transition_matrix_discrete}
\end{align}
where $k\in[1,N]$ and $\nu_k\in[\nu_o,\nu_f]$.
Hence from Eqn. \eqref{eqn:linear_system_discrete}
it follows that the terminal state at $k=N$ is given by
\begin{align}
    \boldsymbol{x}(N)=\prod_{k=0}^{N-1} {A}(k)\boldsymbol{x}(0)+\sum_{\tau=0}^{N-1}\Big(\prod_{k=1+\tau}^{N-1} {A}(k)\Big)B(\tau)\boldsymbol{u}(\tau).
    \label{eqn:discrete_linear_sytem_final}
\end{align}
The state transition matrix of the discrete-time system \eqref{eqn:linear_system_discrete}, $\Phi_d(k,m)$, is introduced as follows:
\begin{align}
\Phi_d(k,m) =\left\{ \begin{array}{l}
A(k-1)\dots A(m)  \quad\quad k>m\geq 0
\\
\mathbf{I}_6 ~~\qquad\qquad \qquad \qquad k=m
\end{array}
\right.
\end{align}
where $k$ and $m$ are non negative integers. Therefore Eqn.~\eqref{eqn:discrete_linear_sytem_final} becomes
\begin{align}
        & \boldsymbol{x}(N)=\Phi_d(N,0)\boldsymbol{x}(0)+\sum_{\tau=0}^{N-1}\Phi_d(N,\tau+1)B(\tau) \boldsymbol{u}(\tau).
\end{align}
From Eqn. \eqref{eqn:discrete_linear_sytem_final}, the terminal state of the active spacecraft can be written in a compact form as follows
\begin{align}
    \boldsymbol{x}(N)=\boldsymbol{\beta}+\boldsymbol{C}_N\boldsymbol{U},  \quad 
    \label{eqn:terminal_constraint}
\end{align}

where 
\begin{align}
    & \boldsymbol{U}  =[\boldsymbol{u}(0)^\mathrm{T},\;\boldsymbol{u}(1)^\mathrm{T},\;\dots\boldsymbol{u}(N-1)^\mathrm{T}]^\mathrm{T},\label{eqn:control_input_U}\\
    & \boldsymbol{C}_N =[\Phi_d(N,1)B(0),\;\;\Phi_d(N,2)B(1),\dots B(N-1)],\label{eqn:cn}\\
    & \boldsymbol{\beta} =\Phi_d(N,0)\boldsymbol{x}(0).
    \label{eqn:beta}
\end{align}


\section{Problem Formulation}\label{sec:problem}
{In this section, the  minimum-fuel problem for the system described by Eqn. \eqref{eqn:linear_system_discrete} is formulated. It is assumed that the passive target and active chaser spacecraft are initially located on two non-coplanar, non-circular orbits. The chaser spacecraft is then required to satisfy terminal constraints of position and velocity of the target spacecraft at a fixed final instant of true anomaly $\nu_f$, while minimizing fuel consumption for thrust vectoring and orthogonal vectoring.}

{
In particular, let $\boldsymbol{U}_{0: N-1}= \{\boldsymbol{u}(k) \in \mathbb{R}^3: k\in[0,N-1]_d \}$ denote the sequence 
of inputs applied to the system \eqref{eqn:linear_system_discrete} for $k\in [0,N-1]_d$. As it is already mentioned, there are basically two control modes through which a spacecraft can operate; thrust vectoring and orthogonal vectoring. Thrust vector maneuvering can be achieved with a single thruster which can point in any direction in the $X-Y-Z$ coordinate axes of the LVLH frame by steering the thruster using attitude control commands. In the case of orthogonal vectoring, there are three fixed thrusters along the $X-Y-Z $ coordinate axes in the LVLH frame. Following \cite{leomanni2019sum}, it is assumed that in the thrust vectoring case, the fuel consumed is directly proportional to $\left\|\boldsymbol{U}_{0: N-1}\right\|_{\ell_{2}/\ell_1}$, where 
\begin{equation}
    \left\|\boldsymbol{U}_{0: N-1}\right\|_{\ell_{2}/\ell_1}=\sum_{i=0}^{N-1}\|\boldsymbol{u}(i)\|_2,
    \label{eqn:performance_index_l2/l1}
\end{equation}  
whereas in the orthogonal vectoring case, the fuel consumed is directly proportional to $\|\boldsymbol{U}_{0: N-1}\|_{\ell_{1}}$ where,
\begin{equation}
   \left\|\boldsymbol{U}_{0: N-1}\right\|_{\ell_{1}} = \sum_{i=0}^{N-1}\|\boldsymbol{u}(i)\|_1=\|\boldsymbol{U}\|_1.
       \label{eqn:performance_index_l1}
\end{equation}
The corresponding optimal control (minimum-fuel) problems are formulated as follows:}

\begin{problem}
{Let $\boldsymbol{x}_{0}$, $\boldsymbol{x}_f\in\mathbb{R}^6$ and $N>0$ be given. Find a control sequence $\boldsymbol{U}_{0: N-1}^{\star}:=\left\{\boldsymbol{u}^{\star}(k) \in \mathbb{R}^{3}: k\in[0, N-1]_d\right\}$ that will steer the system described by Eqn. \eqref{eqn:discrete_linear_sytem_final} from state $\boldsymbol{x}(0)=\boldsymbol{x}_{0}$ at stage $k=0$ to the final state $\boldsymbol{x}(N)=\boldsymbol{x}_f$ at stage $k=N$ while minimizing the performance index $\mathcal{J}_{2,1}\left(\boldsymbol{U}_{0: N-1}\right):=\left\|\boldsymbol{U}_{0: N-1}\right\|_{\ell_2/\ell_1}$ (thrust vectoring) or $\mathcal{J}_{1}\left(\boldsymbol{U}_{0: N-1}\right):=\left\|\boldsymbol{U}_{0: N-1}\right\|_{\ell_1}$ (orthogonal vectoring).}
\label{problem:optimal_control_problem}
\end{problem}

{Next, Problem \ref{problem:optimal_control_problem} is associated with two convex optimization problems, an $\ell_2/\ell_1$ norm minimization problem for the thrust vectoring case and an $\ell_1$-norm minimization problem for the orthogonal vectoring case.}

\begin{problem}
Find a vector $\boldsymbol{U}^\star\in \mathbb{R}^{6N}$ that minimizes the performance index   $J_{2,1}\left(\boldsymbol{U}\right)=\|\boldsymbol{U}\|_{2,1}=\sum_{i=0}^{N-1}\|\boldsymbol{u}(i)\|_2$ (for thrust vectoring) and $J_{1}\left(\boldsymbol{U}\right)=\left\|\boldsymbol{U}\right\|_{1}$ (for orthogonal vectoring) {subject to the following equality constraint:}
\begin{equation}
  \boldsymbol{\beta} + \boldsymbol{C}_N\boldsymbol{U} - \boldsymbol{x}_f= \boldsymbol{0}.
\end{equation}
\label{problem:performance_index}
\end{problem}
\begin{prop}
Problem \ref{problem:optimal_control_problem} and Problem \ref{problem:performance_index} are equivalent in the following sense: if~  $\boldsymbol{U}_{0: N-1}^{\star}:=\left\{\boldsymbol{u}^{\star}(k) \in \mathbb{R}^{3}: k\in[0, N-1]_d\right\}$ is a control sequence that solves Problem~1, then the corresponding vector $\boldsymbol{U}^{\star}=[\boldsymbol{u}^{\star}(0)^\mathrm{T}\; \dots \; \boldsymbol{u}^{\star}(N-1)^\mathrm{T}]^\mathrm{T}$ solves Problem~2, and vice versa.
\end{prop}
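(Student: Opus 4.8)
The plan is to recognize that Problem~\ref{problem:optimal_control_problem} and Problem~\ref{problem:performance_index} are really the same optimization written in two notations, and to make this precise by exhibiting a bijection between their feasible sets that preserves the objective value. Once such a map is in hand, global minimizers of one problem correspond exactly to global minimizers of the other, which is the asserted equivalence.

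First I would define the stacking map $\Psi$ that sends a control sequence $\boldsymbol{U}_{0:N-1}=\{\boldsymbol{u}(k)\in\mathbb{R}^3 : k\in[0,N-1]_d\}$ to the vector $\boldsymbol{U}=[\boldsymbol{u}(0)^{\mathrm{T}},\dots,\boldsymbol{u}(N-1)^{\mathrm{T}}]^{\mathrm{T}}$ of Eqn.~\eqref{eqn:control_input_U}; this is a bijection onto $\mathbb{R}^{3N}$, with inverse reading off the successive three-dimensional blocks. Under $\Psi$ the objectives coincide termwise: by their definitions $\mathcal{J}_{2,1}(\boldsymbol{U}_{0:N-1})=\sum_{i=0}^{N-1}\|\boldsymbol{u}(i)\|_2=J_{2,1}(\Psi(\boldsymbol{U}_{0:N-1}))$, and likewise $\mathcal{J}_{1}(\boldsymbol{U}_{0:N-1})=J_{1}(\Psi(\boldsymbol{U}_{0:N-1}))$, so $\Psi$ leaves the cost unchanged in both the thrust-vectoring and orthogonal-vectoring cases.

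Next I would match the constraints. The only requirement in Problem~\ref{problem:optimal_control_problem} is that the trajectory generated by the discrete dynamics from $\boldsymbol{x}(0)=\boldsymbol{x}_0$ satisfy $\boldsymbol{x}(N)=\boldsymbol{x}_f$. Unrolling the recursion yields the closed form $\boldsymbol{x}(N)=\boldsymbol{\beta}+\boldsymbol{C}_N\boldsymbol{U}$ of Eqn.~\eqref{eqn:terminal_constraint}, where $\boldsymbol{\beta}=\Phi_d(N,0)\boldsymbol{x}_0$ of Eqn.~\eqref{eqn:beta} absorbs the initial condition and $\boldsymbol{C}_N$ is given by Eqn.~\eqref{eqn:cn}. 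Consequently $\boldsymbol{x}(N)=\boldsymbol{x}_f$ holds if and only if $\boldsymbol{\beta}+\boldsymbol{C}_N\boldsymbol{U}-\boldsymbol{x}_f=\boldsymbol{0}$, which is exactly the equality constraint of Problem~\ref{problem:performance_index}; hence $\Psi$ carries the feasible set of Problem~\ref{problem:optimal_control_problem} bijectively onto that of Problem~\ref{problem:performance_index}.

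Putting these together, $\Psi$ is a cost-preserving bijection between the two feasible sets, so a feasible sequence minimizes the cost in Problem~\ref{problem:optimal_control_problem} if and only if its image minimizes the cost in Problem~\ref{problem:performance_index}, which establishes both directions. There is no substantial obstacle here; the argument is essentially bookkeeping. The one point that warrants care is verifying that the affine relation $\boldsymbol{x}(N)=\boldsymbol{\beta}+\boldsymbol{C}_N\boldsymbol{U}$ faithfully encodes both the dynamics and the initial condition $\boldsymbol{x}(0)=\boldsymbol{x}_0$, since it is this identity (already derived in the preceding section) that makes the two feasibility descriptions literally coincide rather than merely correspond.
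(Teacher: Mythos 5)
Your proposal is correct and follows essentially the same route as the paper's proof: identify the cost functions via the one-to-one correspondence between the input sequence and the stacked vector, and rewrite the terminal condition $\boldsymbol{x}(N)=\boldsymbol{x}_f$ as the affine constraint $\boldsymbol{\beta}+\boldsymbol{C}_N\boldsymbol{U}-\boldsymbol{x}_f=\boldsymbol{0}$ using Eqn.~\eqref{eqn:terminal_constraint}. Your explicit framing of the stacking map as a cost-preserving bijection between feasible sets is just a slightly more formal packaging of the same argument.
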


\begin{proof}
{For the thrust vectoring case, in view of  Eqn.~\eqref{eqn:performance_index_l2/l1}, it follows that \begin{equation}
    \mathcal{J}_{2,1}\left(\boldsymbol{U}_{0: N-1}\right):=  \left\|\boldsymbol{U}_{0: N-1}\right\|_{\ell_{2}/\ell_1} = \sum_{i=0}^{N-1}\|\boldsymbol{u}(i)\|_2:=J_{2,1}\left(\boldsymbol{U}\right),
\end{equation}
where the second equality follows from the fact that 
the input sequence $\boldsymbol{U}_{0: N-1}:=\left\{\boldsymbol{u}(k) \in \mathbb{R}^{3}: k\in[0, N-1]_d\right\}$ and the stacked vector $\boldsymbol{U}=[\boldsymbol{u}(0)^\mathrm{T},\;\boldsymbol{u}(1)^\mathrm{T},\;\dots \;,\boldsymbol{u}(N-1)^\mathrm{T}]^\mathrm{T}$ are in one-to-one correspondence. Similarly, in view of  Eqn.~\eqref{eqn:performance_index_l1}, the performance index in Problem \ref{problem:optimal_control_problem} for the orthogonal vectoring case satisfies the following equation:
\begin{equation}
    \mathcal{J}_{1}\left(\boldsymbol{U}_{0: N-1}\right):=  \left\|\boldsymbol{U}_{0: N-1}\right\|_{\ell_1} = \sum_{i=0}^{N-1}\|\boldsymbol{u}(i)\|_1:=J_{1}\left(\boldsymbol{U}\right).
\end{equation}
Now using Eqn.~\eqref{eqn:terminal_constraint}, the terminal constraint $\boldsymbol{x}(N)=\boldsymbol{x}_f$ in Problem \ref{problem:optimal_control_problem} can be written as the following linear constraint:
\begin{align}
   \boldsymbol{\beta} + \boldsymbol{C}_N\boldsymbol{U}- \boldsymbol{x}_f= \boldsymbol{0},
\end{align}
where in the last equation the initial condition $\boldsymbol{x}_0=\boldsymbol{x}(0)$ has been used.
One concludes that Problem \ref{problem:optimal_control_problem} and Problem \ref{problem:performance_index} are equivalent to each other.}
\end{proof}

\section{Proposed Solution Approach}\label{sec:algorithms}
In this section, two IRLS algorithms are proposed to generate sparse control sequences which minimize the net fuel consumption in the case of thrust vectoring and orthogonal vectoring respectively.

{The IRLS algorithm is used to find the minimum $\ell_1$ norm solution to an under-determined system of linear equations $\boldsymbol{y}= C \boldsymbol{x}$ where $C$ is an $M\times N$-dimensional matrix with $M<N$. A solution to the latter system will necessarily lie on a $(N-M)$ dimensional hyperplane. If $M$ is significantly smaller than $N$, then the solution to the under-determined system can admit a sparse representation. When the matrix $C$ enjoys the so-called restricted isometry property \cite{candes2005decoding_irls}, then the minimum $\ell_1$ norm solution is guaranteed to be a sparse vector. Powerful linear programming techniques can be utilized to find the minimum $\ell_1$ norm solution but their implementation requires specialized numerical techniques (e.g., interior point methods). In this work, we employ a much simpler and easily implementable approach called Iteratively Reweighted Least Squares (IRLS) algorithm. 
By using IRLS algorithm, one can find the minimum $\ell_1$ norm solution of an under-determined system of linear equations \cite{sparse_bakolas2019computation} or the minimum $\ell_2/\ell_1$ norm solution \cite{block_wang2013recovery} by finding the limit of a sequence of the minimum weighted $\ell_2$ norm (the norm is weighted by a positive definite weighting matrix) solution of the same under-determined linear system. The IRLS algorithm updates the weighting matrices at each iteration in such a way that it is ensured that the limit of the sequence corresponds to the minimum $\ell_1$ or $\ell_2/\ell_1$ norm solution. 
A detailed analysis of the IRLS algorithm can be found in \cite{robust_efficient_irls_2010iteratively}.

}

\subsection{Thrust vectoring}
In this section, Problem \ref{problem:performance_index} is addressed for the case of thrust vectoring. The main steps of the IRLS algorithm, which will generate control sequences that minimizes the $\ell_2/\ell_1$ control norm are described next (Algorithm \ref{algo:irls_sparse_l2/l1}). The presentation follows closely to \cite{sparse_bakolas2019computation,block_wang2013recovery}.

\subsubsection{IRLS algorithm for $\ell_2/\ell_1$ optimization}

\begin{algorithm}[H]
\caption{ IRLS algorithm for solving $\ell_2/\ell_1$ optimization problem}
\begin{algorithmic}[1]
\State $\boldsymbol{w}^{[0]}(i)=1\;\forall \;i\in [1,Nm]$
\State $\epsilon^{[0]}=1$
\For{$j=0\;\text{to}\;j_{\text{max}}$}
\For{$k=0\dots N-1$}
\State $\mathbf{W}^{[j]}(k)=\operatorname{diag}\left(\boldsymbol{w}^{[j]}{(k m+1)}, \ldots,\boldsymbol{w}^{[j]}{(k m+m)}\right)$
\EndFor
\State $\boldsymbol{\mathcal{W}}^{[j]}=\operatorname{bdiag}\left(\mathbf{W}^{[j]}(0), \ldots, \mathbf{W}^{[j]}(N-1)\right)$
\State $\boldsymbol{u}^{[j+1]}=\left(\boldsymbol{\mathcal{W}}^{[j]}\right)^{-1}\left(\boldsymbol{C}_{N}^{\mathrm{T}}\left(\boldsymbol{\mathcal{W}}^{[j]}\right)^{-1}\right)^{\mathrm{T}}\left(\boldsymbol{C}_{N}^{\mathrm{T}}\left(W^{[j]}\right)^{-1}+\tau \mathbf{I}\right)^{-1}\left(\boldsymbol{C}_{N}^{\mathrm{T}}\left(\boldsymbol{\mathcal{W}}^{[j]}\right)^{-1}\right)^{\mathrm{T}} \boldsymbol{\beta}$
\State $ \varepsilon^{[j+1]}=\min \left\{\varepsilon^{[j]}, \text{max}(\boldsymbol{u}^{[j+1]})\right\}$
\For{$\ell=1,\dots Nm$}
\State $\boldsymbol{w}^{[j+1]}{(\ell)}=\left(\left({u}^{[j+1]}{(\ell)}\right)^{2}+\left(\varepsilon^{[j+1]}\right)^{2}\right)^{-1/4}$
\EndFor
\EndFor
\end{algorithmic}
\label{algo:irls_sparse_l2/l1}
\end{algorithm}

First, the input parameters $\boldsymbol{w}^{[0]}(k)$ for all $k\in[1,Nm]$ and  $\varepsilon^{[0]}$ are initialized to 1 and $j$ is initially set to zero where $m$ is the dimension of the control input. In this case $m=3$. For a particular value of $k\in[0,N-1]$, the weight matrix  $\mathbf{W}^{[j]}(k)$ is defined as follows
\begin{align}
  \mathbf{W}^{[j]}(k)=\operatorname{diag}\left(\boldsymbol{w}^{[j]}{(k m+1)}, \ldots,\boldsymbol{w}^{[j]}{(k m+m)}\right) 
  \label{eqn:weight_matrix_l2}
\end{align}
for $k\in[0,N-1]$. In addition $\boldsymbol{\mathcal{W}}^{[j]} $ is defined as follows
\begin{align}
  \boldsymbol{\mathcal{W}}^{[j]}=\operatorname{bdiag}\left(\mathbf{W}^{[j]}(0), \ldots, \mathbf{W}^{[j]}(N-1)\right)  
  \label{eqn:weight_matirx_big_l2}
\end{align}
where the matrices $\mathbf{W}^{[j]}(k)$ for $k\in[0,N-1]$ and $\boldsymbol{\mathcal{W}}^{[j]}$ are positive definite (and thus non-singular) provided that $\boldsymbol{w}^{[j]}(k)\geq0$ for all $k\in[1,Nm]$. 

Then, the control sequence $\boldsymbol{u}^{[j+1]}$ is given by
\begin{align}
  \boldsymbol{u}^{[j+1]}=&\left(\boldsymbol{\mathcal{W}}^{[j]}\right)^{-1}\left(\boldsymbol{C}_{N}^{\mathrm{T}}\left(\boldsymbol{\mathcal{W}}^{[j]}\right)^{-1}\right)^{\mathrm{T}}\left(\boldsymbol{C}_{N}^{\mathrm{T}}\left(\boldsymbol{\mathcal{W}}^{[j]}\right)^{-1}+ \mathbf{I}\right)^{-1}\left(\boldsymbol{C}_{N}^{\mathrm{T}}\left(\boldsymbol{\mathcal{W}}^{[j]}\right)^{-1}\right)^{\mathrm{T}} \boldsymbol{\beta},
  \label{eqn:control_vector_l2}
\end{align}
where $\boldsymbol{C}_N$ and $\boldsymbol{\beta}$ are given by Eqns.~\eqref{eqn:cn} and~\eqref{eqn:beta} respectively.

Now $ \varepsilon^{[j+1]}$ is set equal to $\min \left\{\varepsilon^{[j]}, \text{max}(\boldsymbol{u}^{[j+1]})\right\},$ where $\text{max}(\boldsymbol{u}^{[j+1]})$ denotes the maximum element of the vector $\boldsymbol{u}^{[j+1]}$. Then $\boldsymbol{w}^{[j+1]}{(\ell)}$ is updated again as follows
\begin{align}
  \boldsymbol{w}^{[j+1]}{(\ell)}=\left(\left({u}^{[j+1]}{(\ell)}\right)^{2}+\left(\varepsilon^{[j+1]}\right)^{2}\right)^{-1 / 4}
\end{align}
for all $\ell\in[1,Nm]$, where ${u}^{[j+1]}{(\ell)}$ are the elements in the vector $\boldsymbol{u}^{[j+1]}$ from Eqn. \eqref{eqn:control_vector_l2}.
Now $j$ is set to $j+1$. Next, the updated $\boldsymbol{w}^{[j]}{(\ell)}$ is used to update the matrix $\mathbf{W}^{[j]}(k)$ and then update matrices $\boldsymbol{\mathcal{W}}^{[j]}$ and $\boldsymbol{u}^{[j+1]}$ given by Eqns. \eqref{eqn:weight_matirx_big_l2} and \eqref{eqn:control_vector_l2}. This process is repeated until the control sequence $\boldsymbol{u}$ converges to the optimal control sequence $\boldsymbol{u}^*_\text{IRLS}$. If $j $ is less than or equal to $ j_{\max }$ and $\varepsilon^{[j]} \in[0, \bar{\varepsilon}],$ then Algorithm \ref{algo:irls_sparse_l2/l1} is terminated successfully. Else if $\varepsilon^{[j]} \notin[0, \bar{\varepsilon}]$, two cases arise. First, if $j<j_{\max },$ then go to Eqn. \eqref{eqn:weight_matrix_l2} and second if $j=j_{\max }$, the algorithm failed to converge. Hence, it is suggested to set a larger $j_{\max}$ to increase the chances of success.

\subsection{Orthogonal vectoring}
In this section, an IRLS algorithm is proposed to address Problem \ref{problem:performance_index} for the case of orthogonal vectoring. Note that in this case minimization of $\sum_{i=0}^{N-1}\|\boldsymbol{u}(i)\|_1$ in view of \eqref{eqn:performance_index_l1} is equivalent to minimizing $\|\boldsymbol{U}\|_1$ as 
\begin{align}
    \sum_{i=0}^{N-1}\|\boldsymbol{u}(i)\|_1=\|\boldsymbol{U}\|_1
\end{align}
where $\boldsymbol{U}=[\boldsymbol{u}(0)^\mathrm{T},\;\boldsymbol{u}(1)^\mathrm{T},\; \dots, \;\boldsymbol{u}(N-1)^\mathrm{T}]^{\mathrm{T}}$.

Next, the main steps of the proposed algorithm (Algorithm \ref{algo:irls_sparse_l1}) are presented. The exposition follows closely to \cite{sparse_bakolas2019computation}.
First, the input parameters $\boldsymbol{w}^{[0]}(k)$ for all $k\in [1,Nm]$ and  $\varepsilon^{[0]}$ are initialized to 1 and $j$ is set equal to zero.
 For a particular value of $k\in[0,N-1]$, the weight matrices  $\mathbf{W}^{[j]}(k)$ and $\mathcal{W}^{[j]}$ are defined as in Eqns.\eqref{eqn:weight_matrix_l2} and \eqref{eqn:weight_matirx_big_l2} respectively.
Then, $\boldsymbol{u}^{[j+1]}$ is updated as follows
\begin{align}
  \boldsymbol{u}^{[j+1]}=\left(\boldsymbol{\mathcal{W}}^{[j]}\right)^{-1} \boldsymbol{{C}}_{N}^{\mathrm{T}} \mathcal{G}^{[j]}(N)^{-1} \boldsymbol{\beta} 
  \label{eqn:control_vector_l1}
\end{align}
where $\mathcal{G}^{[j]}(k)$ satisfies the following recursive Lyapunov (matrix) equation:
\begin{align}
  \mathcal{G}^{[j]}(k+1)={A}(k) \mathcal{G}^{[j]}(k) {A}(k)^{\mathrm{T}}+{B}(k)\left(\mathbf{W}^{[j]}(N-1-k)\right)^{-1}{B}(k)^{\mathrm{T}}  
\end{align}
for $k \in[0, N-1]$ and  $\mathcal{G}^{[j]}(0)={B}(0)\left(\mathbf{W}^{[j]}(N-1)\right)^{-1} {B}(0)^{\mathrm{T}}$.

\subsubsection{IRLS algorithm for $\ell_1$ optimization}
\begin{algorithm}[H]
\caption{IRLS algorithm for solving $\ell_1$ optimization problem}
\begin{algorithmic}[1]
\State $\boldsymbol{w}^{[0]}(i)=1\;\forall \;i\in[1,Nm]$
\State $\epsilon^{[0]}=1$
\For{$j=0\;\text{to}\;j_\text{max}$}
\For{$k=0,\dots N-1$}
\State $\mathbf{W}^{[j]}(k)=\operatorname{diag}\left(\boldsymbol{w}^{[j]}{(k m+1)}, \ldots,\boldsymbol{w}^{[j]}{(k m+m)}\right)$
\EndFor
\State $\boldsymbol{\mathcal{W}}^{[j]}=\operatorname{bdiag}\left(\mathbf{W}^{[j]}(0), \ldots, \mathbf{W}^{[j]}(N-1)\right) $
\State $\mathcal{G}^{[j]}(0)={B}(0)\left(\mathbf{W}^{[j]}(N-1)\right)^{-1} {B}(0)^{\mathrm{T}}$
\For{$i=0\;\text{to}\;N-1$}
\State $\mathcal{G}^{[j]}(i+1)={A}(i) \mathcal{G}^{[j]}(i) {A}(i)^{\mathrm{T}}+{B}(i)\left(\mathbf{W}^{[j]}(N-1-i)\right)^{-1} {B}(i)^{\mathrm{T}}$
\EndFor
\State $\boldsymbol{u}^{[j+1]}=\left(\boldsymbol{\mathcal{W}}^{[j]}\right)^{-1} \boldsymbol{C}_{N}^{\mathrm{T}} \boldsymbol{\mathcal { G }}^{[j]}(N)^{-1} \boldsymbol{\beta}$
\For{$\ell=1,\dots Nm$}
\State $\left(\boldsymbol{w}\right)^{[j+1]}{(\ell)}=\left(\left({u}^{[j+1]}{(\ell)}\right)^{2}+\left(\varepsilon^{[j+1]}\right)^{2}\right)^{-1/2}$
\EndFor
\State $\varepsilon^{[j+1]}=\min \left\{\varepsilon^{[j]}, \text{max}(\boldsymbol{u}^{[j+1]})\right\}$ 
\EndFor
\end{algorithmic}
\label{algo:irls_sparse_l1}
\end{algorithm}

Now the value of $ \varepsilon^{[j+1]}$ is set equal to $\min \left\{\varepsilon^{[j]}, \text{max}(\boldsymbol{u}^{[j+1]})\right\},$ where $\text{max}(\boldsymbol{u}^{[j+1]})$ denotes the maximum element in the vector $\boldsymbol{u}^{[j+1]}$. Then, $\boldsymbol{w}^{[j+1]}{(\ell)}$ is updated as follows
\begin{align}
  \boldsymbol{w}^{[j+1]}{(\ell)}=\left(\left({u}^{[j+1]}{(\ell)}\right)^{2}+\left(\varepsilon^{[j+1]}\right)^{2}\right)^{-1 / 2}
\end{align}
for all $\ell\in[1,Nm]$.  $\boldsymbol{u}^{[j+1]}{(l)}$ are the elements of the vector $\boldsymbol{u}^{[j+1]}$ from Eqn. \eqref{eqn:control_vector_l1}.
Now $j$ is set to $j+1$. The updated $\boldsymbol{w}^{[j+1]}{(\ell)}$ is now used to update the matrix $\mathbf{W}^{[j]}(k)$ which is subsequently used to update matrices in Eqn. \eqref{eqn:weight_matirx_big_l2} and \eqref{eqn:control_vector_l1}. This process is repeated until the control sequence $\boldsymbol{u}^{[j+1]}$ converges to the optimal control sequence $\boldsymbol{u}^*_\text{IRLS}$. If $j \leq j_{\max }$ and $\varepsilon^{[j]} \in[0, \bar{\varepsilon}]$, then report success and stop. If $\varepsilon^{[j]} \notin[0, \bar{\varepsilon}],$ then two cases arise. First, if $j<j_{\max },$ go to Eqn. \eqref{eqn:weight_matrix_l2}  and if $j=j_{\max }$ report failure.


\section{Simulation Results}\label{sec:simulation_results}
In this section, numerical simulations are performed to demonstrate the effectiveness of the proposed modified versions of the IRLS algorithm to solve the minimum fuel problem for both the orthogonal vectoring and thrust vectoring cases.

\subsection{Out-of-plane maneuvers for a Geostationary Transfer Orbit (GTO) Mission}
Consider a target passive spacecraft in the elliptical orbit with eccentricity $e=0.73074$ with initial and final true anomaly equal to $0.1\pi$ and $5.2 \mathrm{rad}$ respectively. The initial position and velocity of the active spacecraft in the $Y$ direction with respect to the LVLH frame is $10000\mathrm{m}$ and $-3\mathrm{m/s}$ respectively. The task is to generate optimal control sequences to take the active spacecraft from the initial true anomaly and initial state to the final true anomaly and final state given by $\boldsymbol{x}_f=[0\mathrm{m}\;\;0\mathrm{m}\;\;0\mathrm{m}\;\;0\mathrm{m/s}\;\;0\mathrm{m/s}\;\;0\mathrm{m/s}]^\mathrm{T}$. The orbital rendezvous parameters are given in Table \ref{tab:gto_mission_parameters} and taken from \cite{zhou2011lyapunov_gto}. By numerical simulations, the state trajectories and control signals of the discrete linear system described by Eqn.~\eqref{eqn:discrete_linear_sytem_final} are given in Fig. \ref{fig:IRLS_algorithm_gto}. As seen from the numerical simulations, the IRLS algorithm generates optimal control sequences which takes the active spacecraft from its initial to its final states. It is observed that the minimum $\ell_1$ norm for fuel consumption is around $2\%$ greater than the algorithm proposed in \cite{arzelier2016linearized_compare}. Hence the IRLS algorithm is able to generate control sequences whose $\ell_1$ norm is close to the optimal ones. The state trajectories and control inputs for $N=200$, $N=300$ and $N=600$ are shown in Figures \ref{fig:IRLS_algorithm_gto_n_200}-\ref{fig:IRLS_algorithm_gto}. {Through simulations, it is observed that the value of $N$ must be atleast $200$ to get the desired accuracy for GTO mission}. Further it is also observed that both the states and control sequences are not sensitive to the choice of the sampling period for $N\geq 200$.


The rendezvous parameters for GTO mission is given in Table \ref{tab:gto_mission_parameters} and are taken from \cite{zhou2011lyapunov_gto}.


\begin{table}
\caption{Parameters for GTO mission}
\label{tab:gto_mission_parameters}
    \centering
    \begin{tabular}{ll}
       \hline \hline Semi-major axis & ${a}=24616 \mathrm{km}$ \\
 Eccentricity & ${e}=0.73074$ \\
 Initial anomaly & $\nu_{0}=0.1 \pi \;\mathrm{rad}$ \\
 Initial position & $10000 \mathrm{m}$ \\
 Initial velocity & $-3\mathrm{m} / \mathrm{s}$ \\
 Final anomaly & $\nu_{f}=5.2 \mathrm{rad}$ \\
 Final state vector & $X_{f}^{T}=[0\mathrm{m}\; 0\mathrm{m}/\mathrm{s}]$ \\
\hline \hline
    \end{tabular}
     \end{table}
    \begin{table}
    \caption{Performance of Control Sequences for GTO mission}
    \centering
\begin{tabular}{ll}
\hline \hline Control Algorithm & $\left\|U_{0: N-1}\right\|_{\ell_{1}}$ \\
\hline Minimum $\ell_{1}$ norm (IRLS algorithm) & 6.4211\\ 
 Arzelier et. al method \cite{arzelier2016linearized_compare} & 6.2725 \\ \hline \hline
\end{tabular}
    \label{tab:my_label}
\end{table}


\begin{figure}[h]
\centering
\includegraphics[width=8cm]{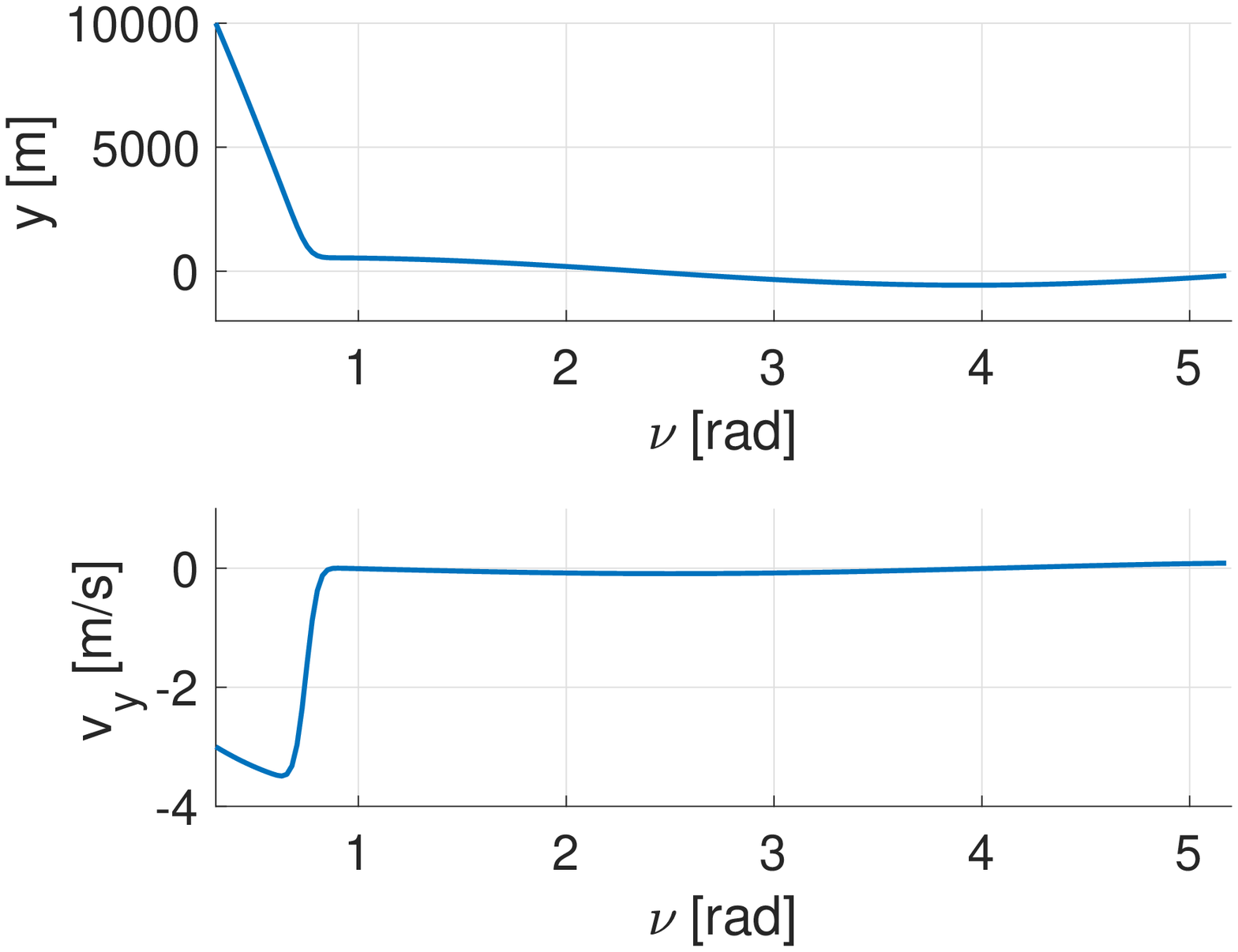}
\includegraphics[width=8cm]{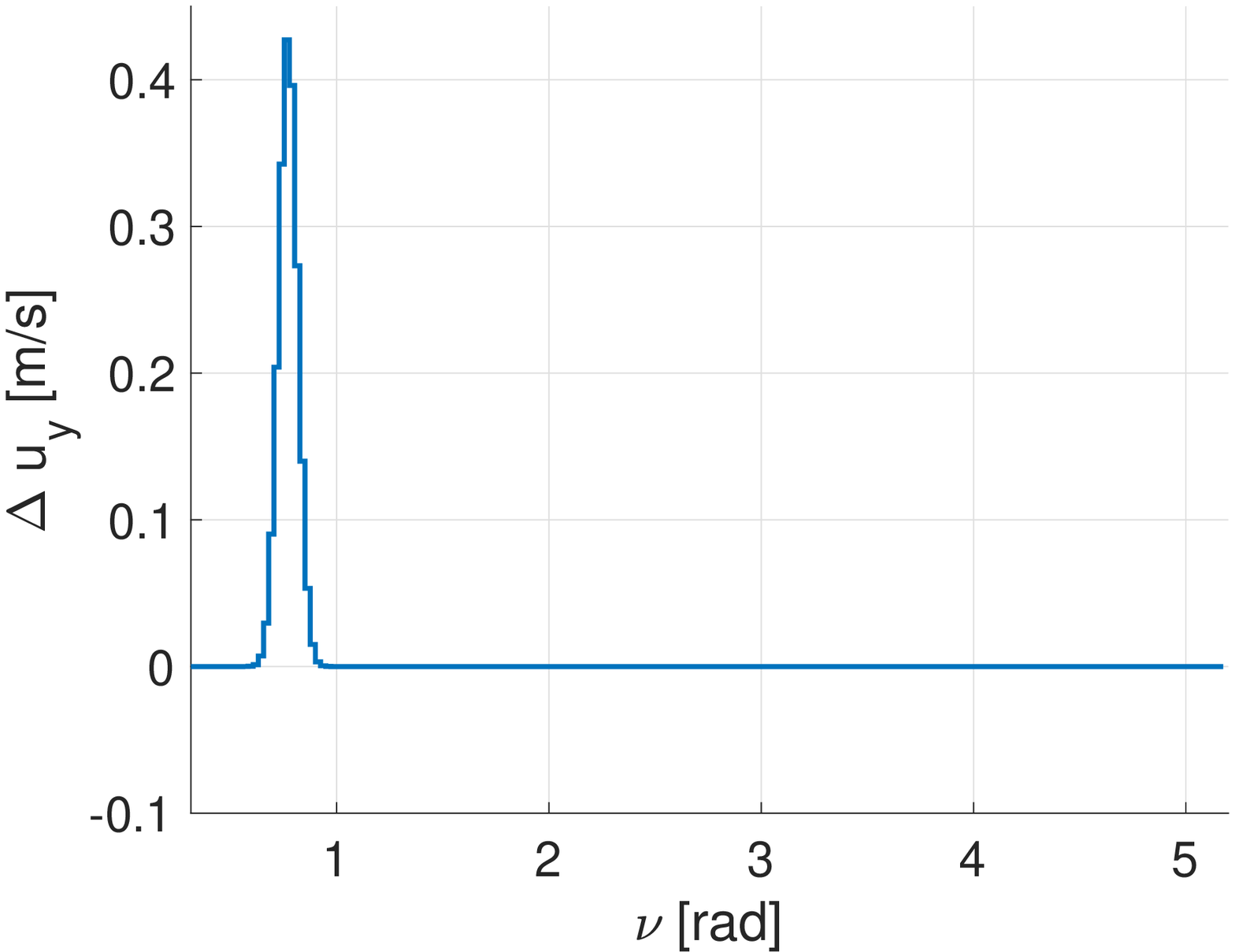}
\caption{IRLS algorithm (Algorithm \ref{algo:irls_sparse_l1}) for solving $\ell_1$ optimization problem in case of out of plane GTO mission (N=200)}
\label{fig:IRLS_algorithm_gto_n_200}
\end{figure}

\begin{figure}[h]
\centering
\includegraphics[width=8cm]{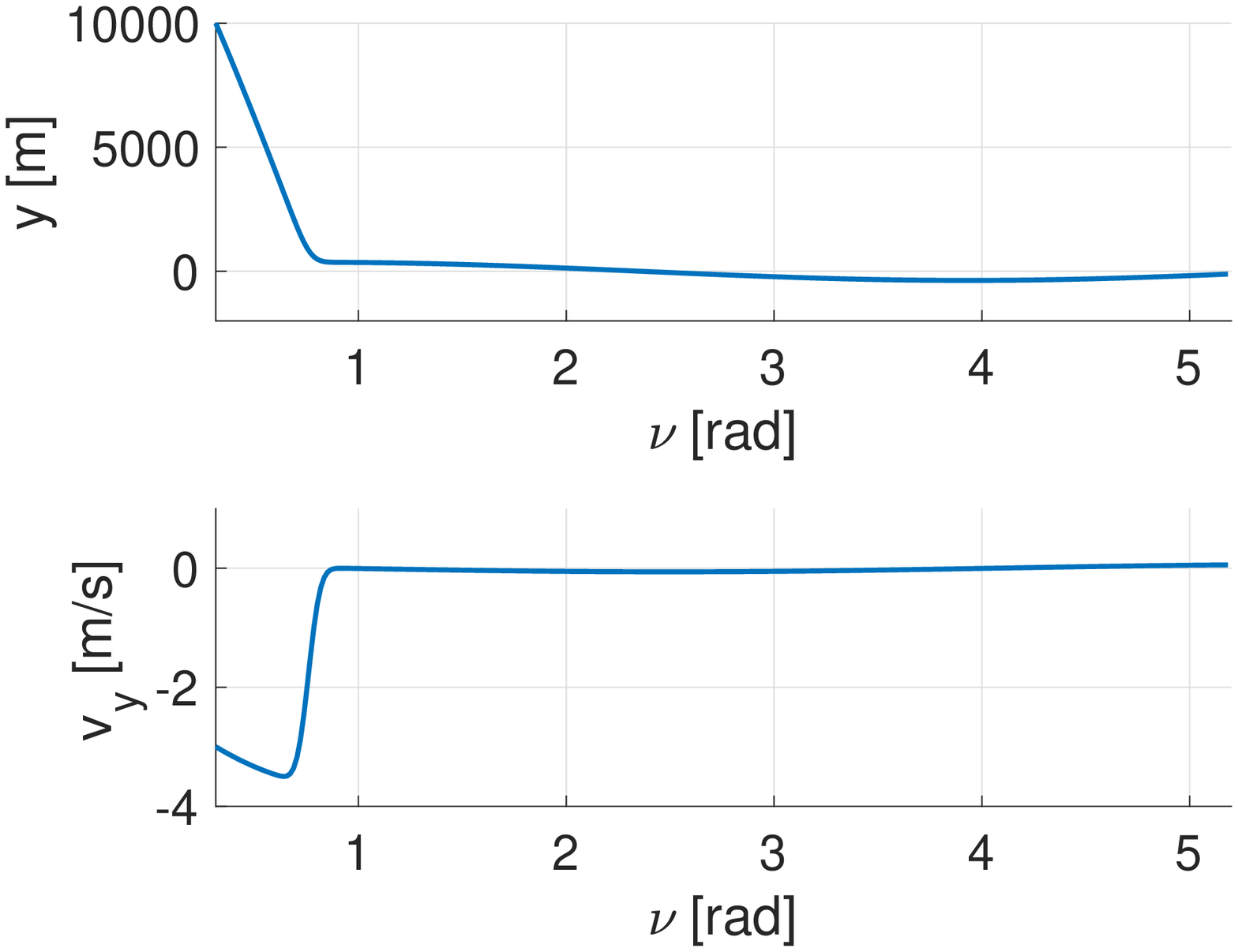}
\includegraphics[width=8cm]{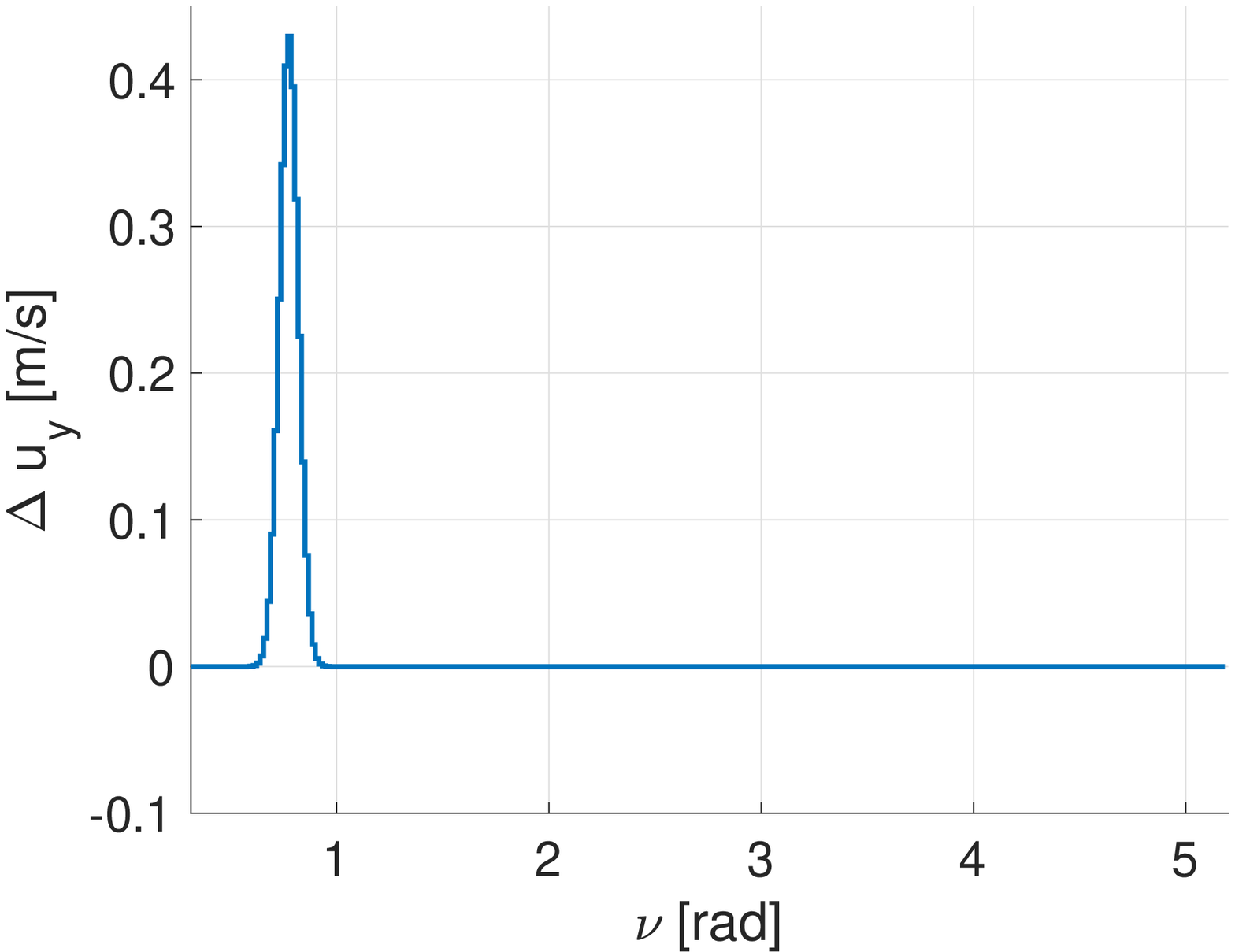}
\caption{IRLS algorithm (Algorithm \ref{algo:irls_sparse_l1}) for solving $\ell_1$ optimization problem in case of out-of-plane GTO mission (N=300)}
\label{fig:IRLS_algorithm_gto_n_300}
\end{figure}

\begin{figure}[h]
\centering
\includegraphics[width=8cm]{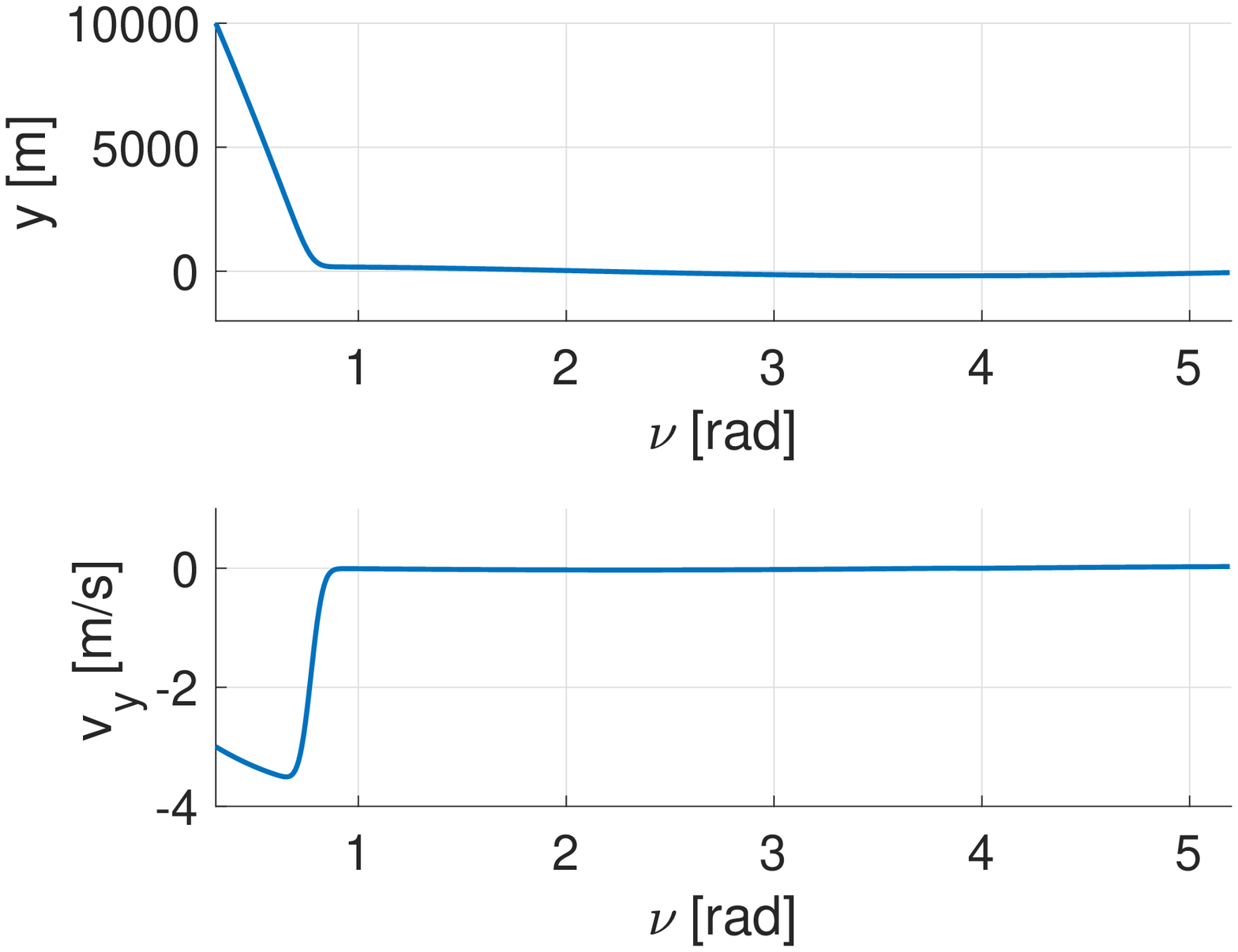}
\includegraphics[width=8cm]{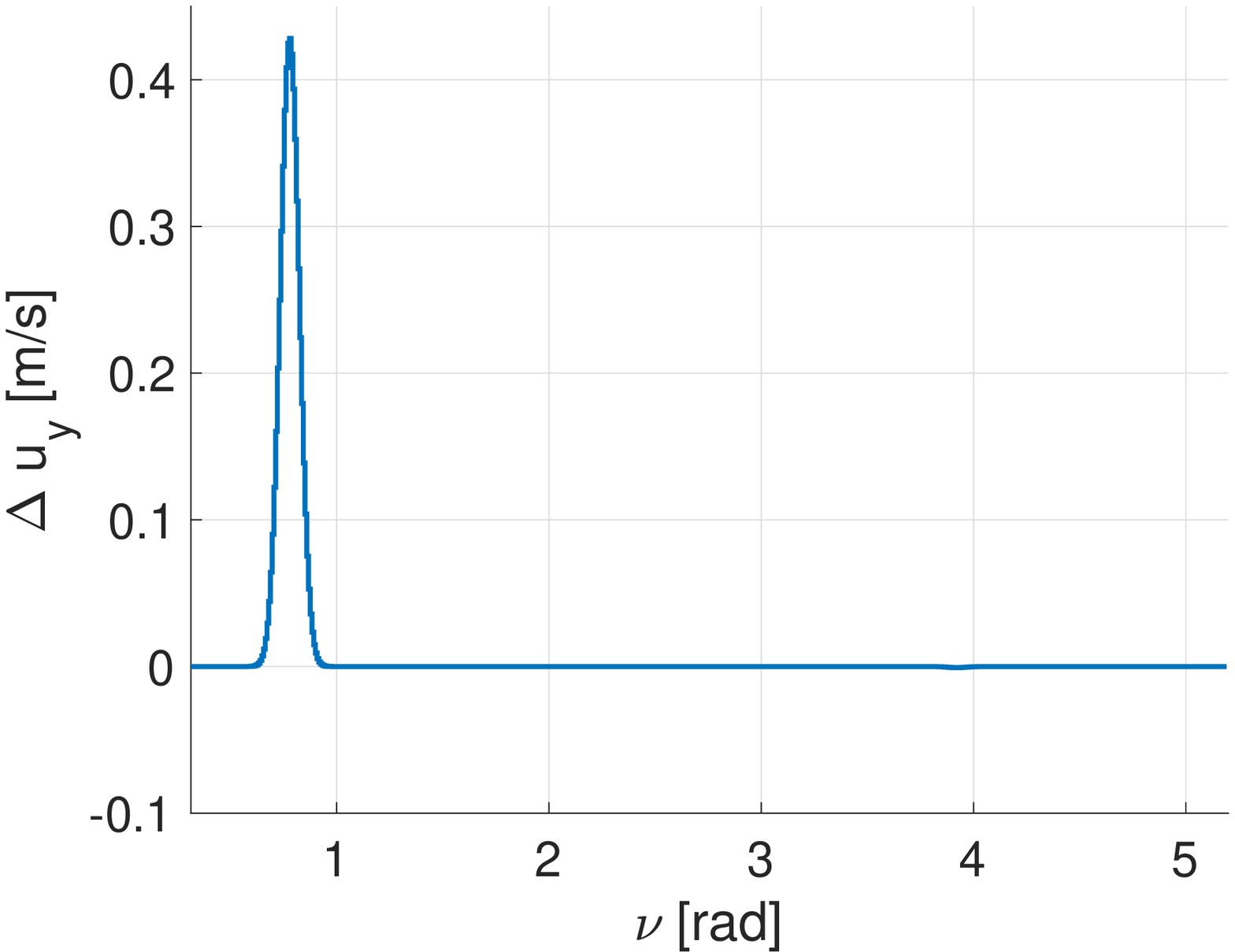}
\caption{IRLS algorithm (Algorithm \ref{algo:irls_sparse_l1}) for solving $\ell_1$ optimization problem in case of out-of-plane GTO mission (N=600)}
\label{fig:IRLS_algorithm_gto}
\end{figure}
\begin{figure}[h]
\centering
\includegraphics[width=8cm]{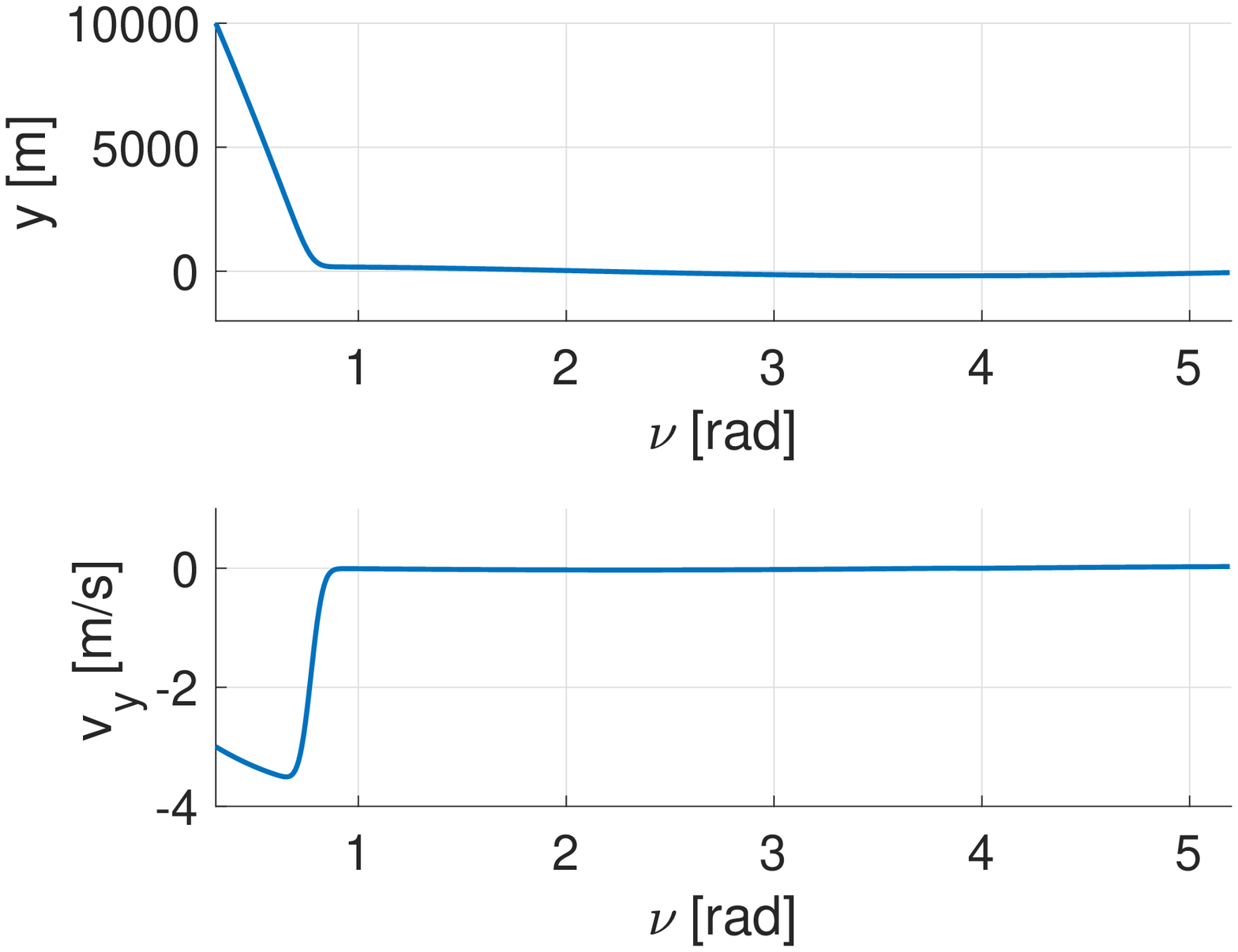}
\caption{Evolution of states when control inputs generated from Algorithm \ref{algo:irls_sparse_l1} are applied to continuous linearized rendezvous equation given in Eqn. \eqref{eqn:continous_state_space_model} (N=600)}
\label{fig:IRLS_algorithm_gto_nonlinear}
\end{figure}

\subsection{Coplanar maneuvers for Automated Vehicle Transfer (ATV) Mission}
An Automated Vehicle Transfer (ATV) mission is considered where the in-plane motion is as in \cite{arzelier2016linearized_compare}. For the in-plane rendezvous, two different examples are studied, one is a single gimbaled thruster (thrust vectoring) in which the fuel consumption is determined by the $\ell_2/\ell_1$ norm of the control sequence. Secondly, the ungimbaled thrusters are considered whose fuel consumption is determined by the $\ell_1$ norm of the control sequence (orthogonal vectoring).
It was observed that the value of the minimum $\ell_2/\ell_1$ and $\ell_1$ norm using the  method proposed herein was close (within 2$\%$ in case of $\ell_1$ and $\ell_2/\ell_1$ norm) to the optimal value using method proposed in \cite{arzelier2016linearized_compare}. The parameters for the ATV mission are taken from \cite{arzelier2016linearized_compare}.

\subsubsection{$\ell_1$ norm minimization (orthogonal vectoring)}
Since the active spacecraft is moving only in the $X-Z$ plane, the control input is given by $\boldsymbol{u}=[u_x\;0\;u_z]^\mathrm{T}$. The optimal control sequence and their locations using IRLS algorithm (Algorithm \ref{algo:irls_sparse_l1}) for minimizing $\ell_1$ norm is given as follows:

Control input in $X$ direction ($u_x$):
\begin{align}
    \text{Optimal control sequence}=\{-8.211,~~-0.889,~~1.752,~~0.214\}\mathrm{m/s}\nonumber\\
    \text{Location of corresponding true anomalies}=\{0,~~3.600,~~7.856,~~8.019\}\mathrm{rad}\nonumber 
\end{align}
The control input in the $Z$ direction is found to be equal to $0$ for all true anomalies in range $[\nu_o,\nu_f]$, i.e., $u_z \equiv 0$. The evolution of the states and the control inputs ( minimum $\ell_1$-norm soluion) for the ATV mission is shown in Fig. \ref{fig:IRLS_algorithm_l1_atv}. The small bumps in the control inputs shown in Figure \ref{fig:IRLS_algorithm_l1_atv} correspond to small amplitude impulse-like corrections computed by the numerical implementation of Algorithm \ref{algo:irls_sparse_l1}. The optimal trajectory in the $X-Z$ plane (in-plane motion) is illustrated in Fig. \ref{fig:traj_x_z_atv_mission} and the corresponding value of $\ell_1$-norm is given in Table \ref{tab:atv_l1}.

\subsubsection{$\ell_2/\ell_1$ norm minimization (thrust vectoring)}
The active spacecraft is moving in the $X-Z$ plane only. Hence, the control input is given by $\boldsymbol{u}=[u_x\;0\;u_z]^\mathrm{T}$. The optimal control sequence and their locations using IRLS algorithm (Algorithm \ref{algo:irls_sparse_l2/l1}) for minimizing $\ell_2/\ell_1$ norm is given as follows:

Control input in $X$ direction ($u_x$):
\begin{align}
    \text{Optimal control sequence}=\{-8.117~-0.132~~-0.848~~-0.001~~0.140~~1.505~~0.318\}\mathrm{m/s}\nonumber\\
    \text{Location of corresponding true anomalies}=\{0~~3.437~~3.600~~3.764~~7.856~~8.019~~8.183\}\mathrm{rad}\nonumber 
\end{align}
The control input in the $Z$ direction is found to be equal $0$ for all true anomalies in range $[\nu_o,\nu_f]$, i.e., $u_z \equiv 0$. The corresponding state trajectories and control inputs are shown in Fig. \ref{fig:IRLS_algorithm_l2/l1_atv}. The following parameters for Automated Transfer Orbit (ATV) mission are taken from \cite{labourdette2008atv}. The small bumps in the control inputs shown in Figure~ \ref{fig:IRLS_algorithm_l2/l1_atv} correspond to small amplitude impulse-like corrections computed by the numerical implementation of Algorithm \ref{algo:irls_sparse_l2/l1}.

\begin{table}
\caption{Parameters for ATV mission}
\label{tab:atv_mission_parameters}
\centering
\begin{subtable}[t]{0.51\textwidth}
    \centering
    \begin{tabular}{ll}
       \hline \hline Semi-major axis & ${a}=6763 \mathrm{km}$ \\
 Eccentricity & ${e}=0.0052$ \\
 Initial anomaly & $\nu_{0}=0 \;\mathrm{rad} .$ \\
 Initial state vector $X_{0}^{T}$ & {$[-30\mathrm{km}\;0.5\mathrm{km}\;8.5140\mathrm{m}/\mathrm{s}\;0\mathrm{m}/\mathrm{s}]$} \\
 Final anomaly & $\nu_{f}=8.1831 \mathrm{rad}$ \\
 Final state vector $X_{f}^{T}$ & {$[-100\mathrm{m}\;0\mathrm{m}\;0\mathrm{m}/\mathrm{s}\;0\mathrm{m}/\mathrm{s}]$} \\
\hline \hline
    \end{tabular}
    \end{subtable}
    \end{table}
    
    \begin{table}
    \centering
\caption{Performance of Control Sequences}
\label{tab:atv_l1}
    \centering
\begin{tabular}{ll}
\hline \hline Control Algorithm & $\left\|U_{0: N-1}\right\|_{\ell_{1}}$ \\
\hline Minimum $\ell_1$ norm (Algorithm \ref{algo:irls_sparse_l1}) & 11.0677\\
 Arzelier et. al method \cite{arzelier2016linearized_compare} & 10.8415 \\ \hline \hline
\end{tabular}
   \end{table}
   \begin{table}
    \caption{Performance of Control Sequences}
    \label{tab:atv_l2/l1}
    \centering
\begin{tabular}{ll}
\hline \hline Control Algorithm & $\left\|U_{0: N-1}\right\|_{\ell_{2}/\ell_1}$ \\
\hline Minimum $\ell_{2}/\ell_1$ norm (Algorithm \ref{algo:irls_sparse_l2/l1}) & 11.0623\\ 
 Arzelier et. al method \cite{arzelier2016linearized_compare} & 10.7989 \\ \hline \hline
\end{tabular}
\end{table}

\begin{figure}[h]
\centering
\includegraphics[width=8cm]{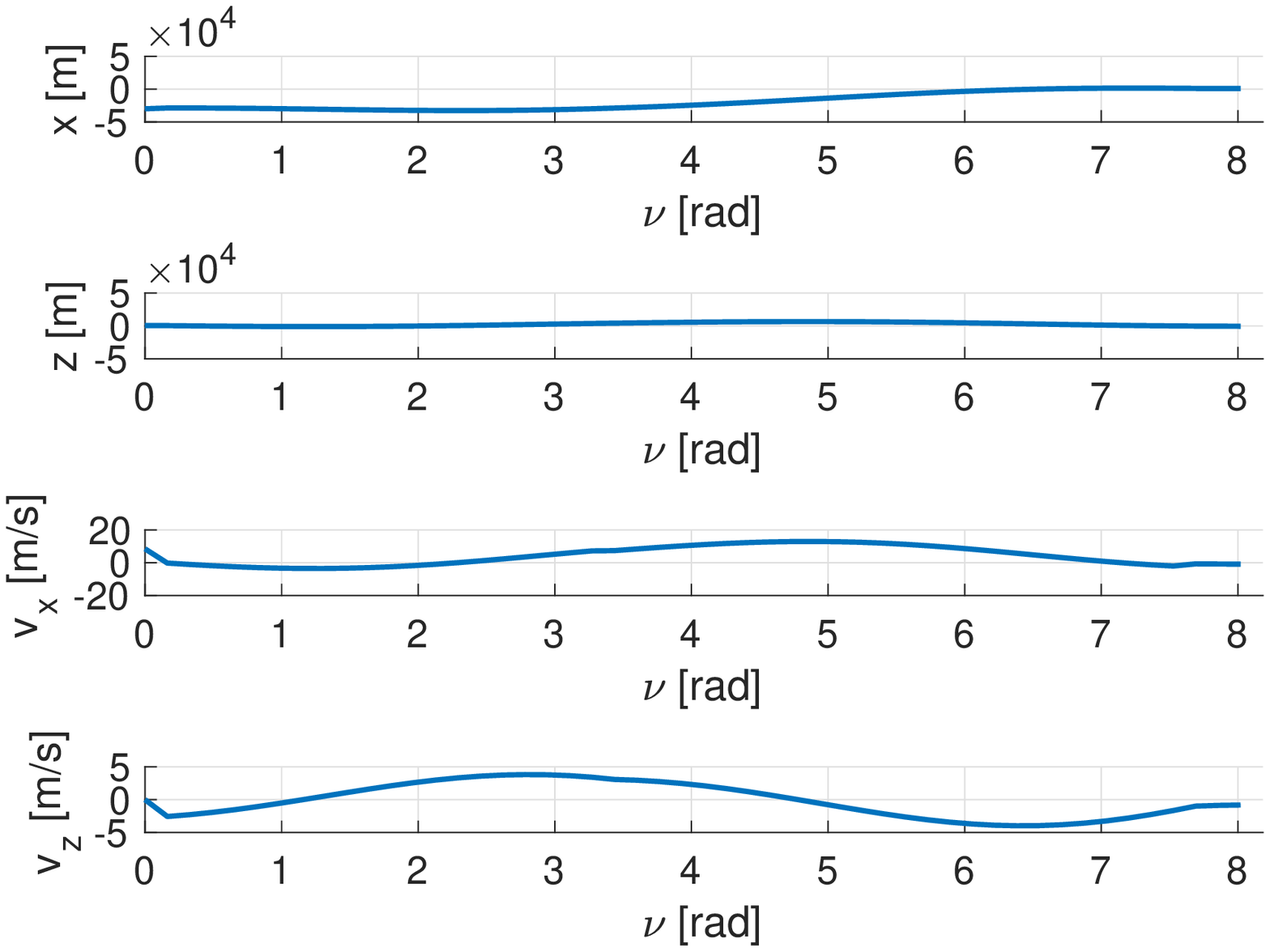}
\includegraphics[width=8cm]{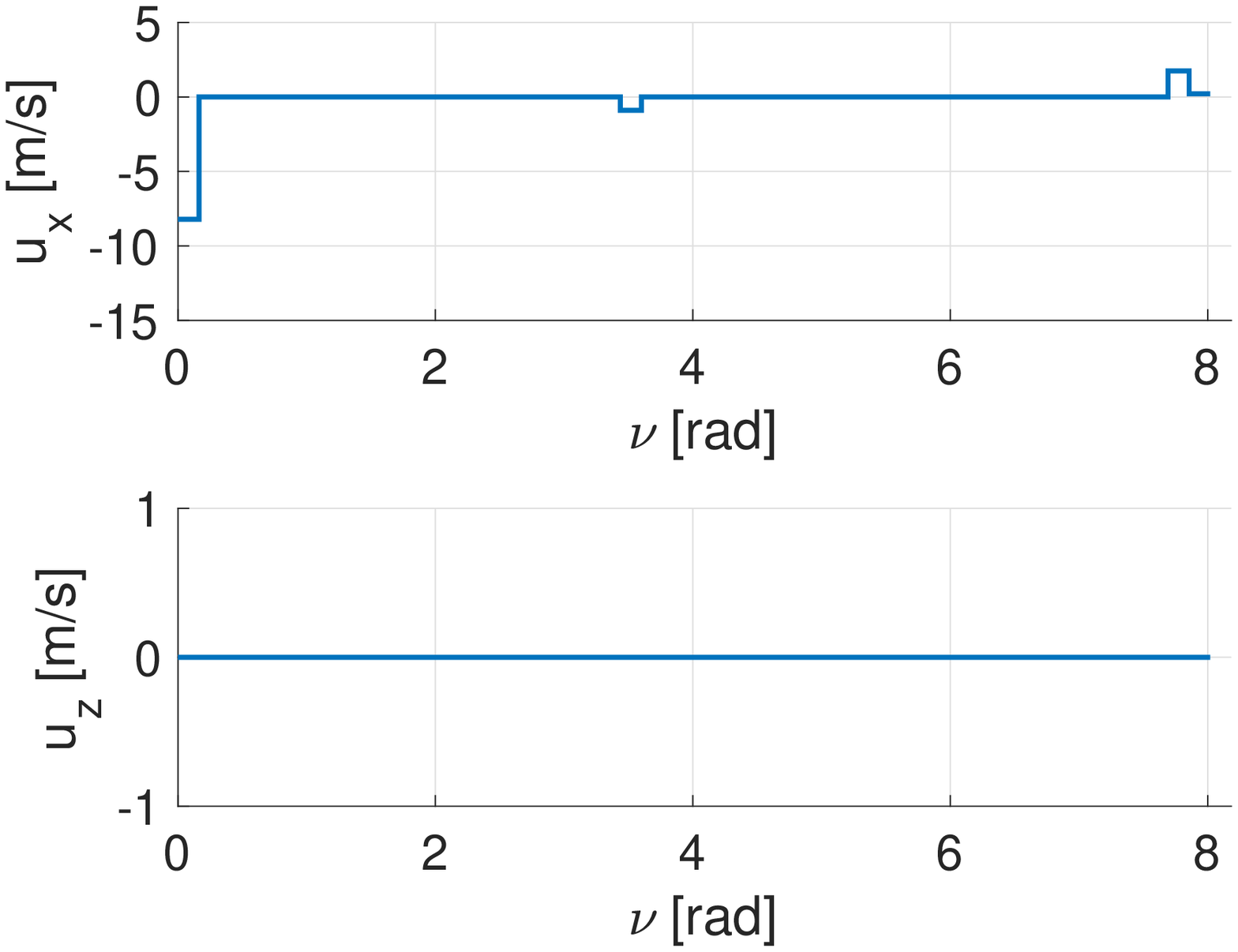}
\caption{IRLS algorithm for solving the $\ell_1$ optimization problem for in-plane ATV mission (N=50)}
\label{fig:IRLS_algorithm_l1_atv}
\end{figure}

\begin{figure}[H]
\centering
\includegraphics[width=13cm]{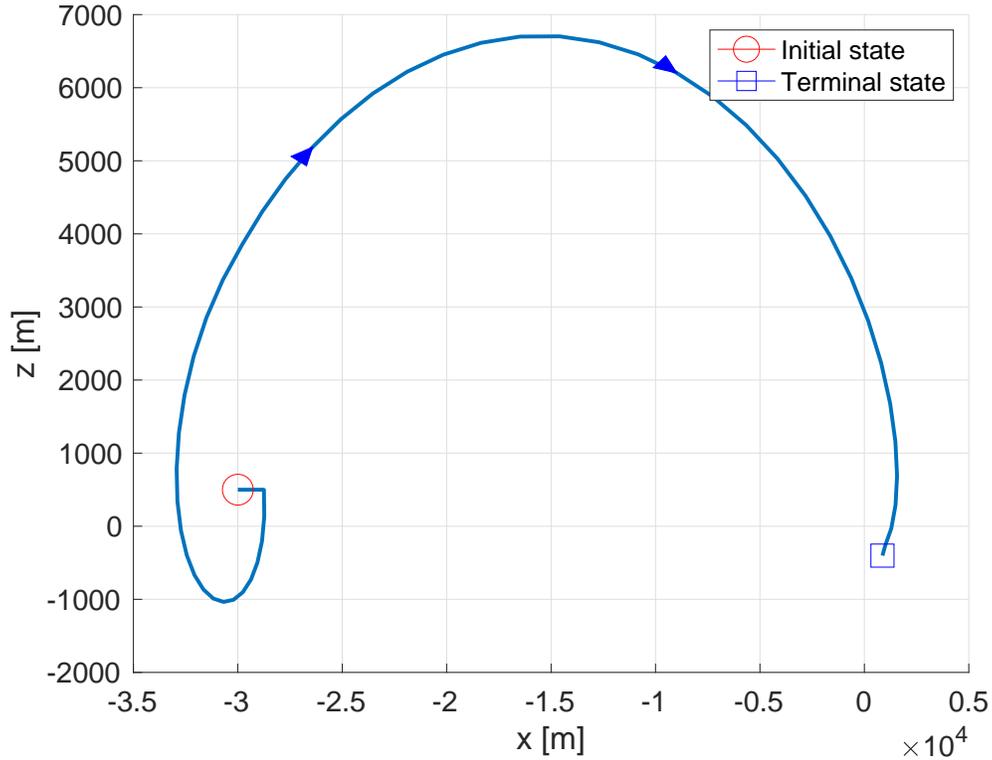}
\caption{Optimal $\ell_1$ norm trajectory in the $X-Z$ plane for the ATV mission}
\label{fig:traj_x_z_atv_mission}
\end{figure}

\begin{figure}[H]
\centering
\includegraphics[width=8cm]{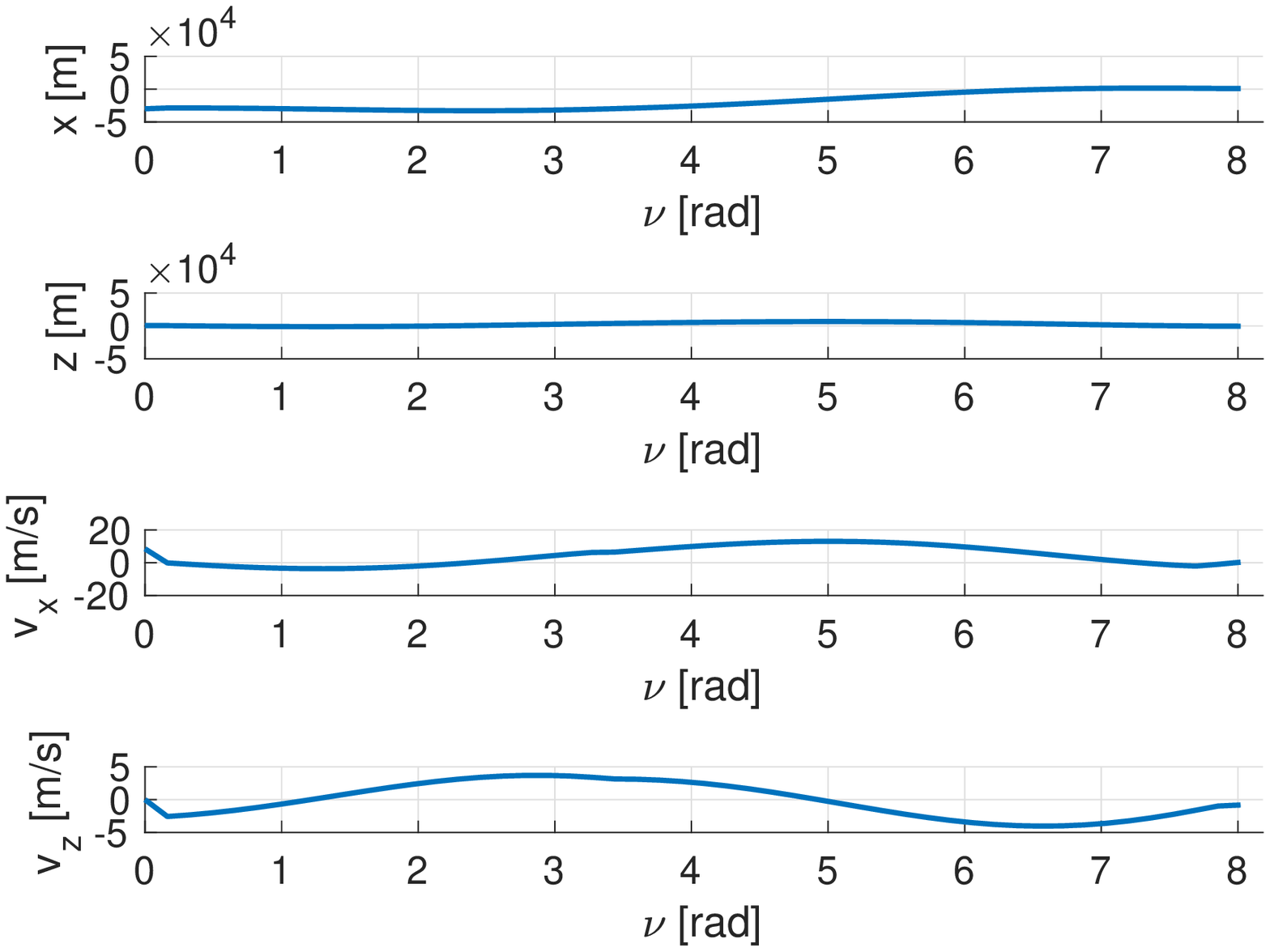}
\includegraphics[width=8cm]{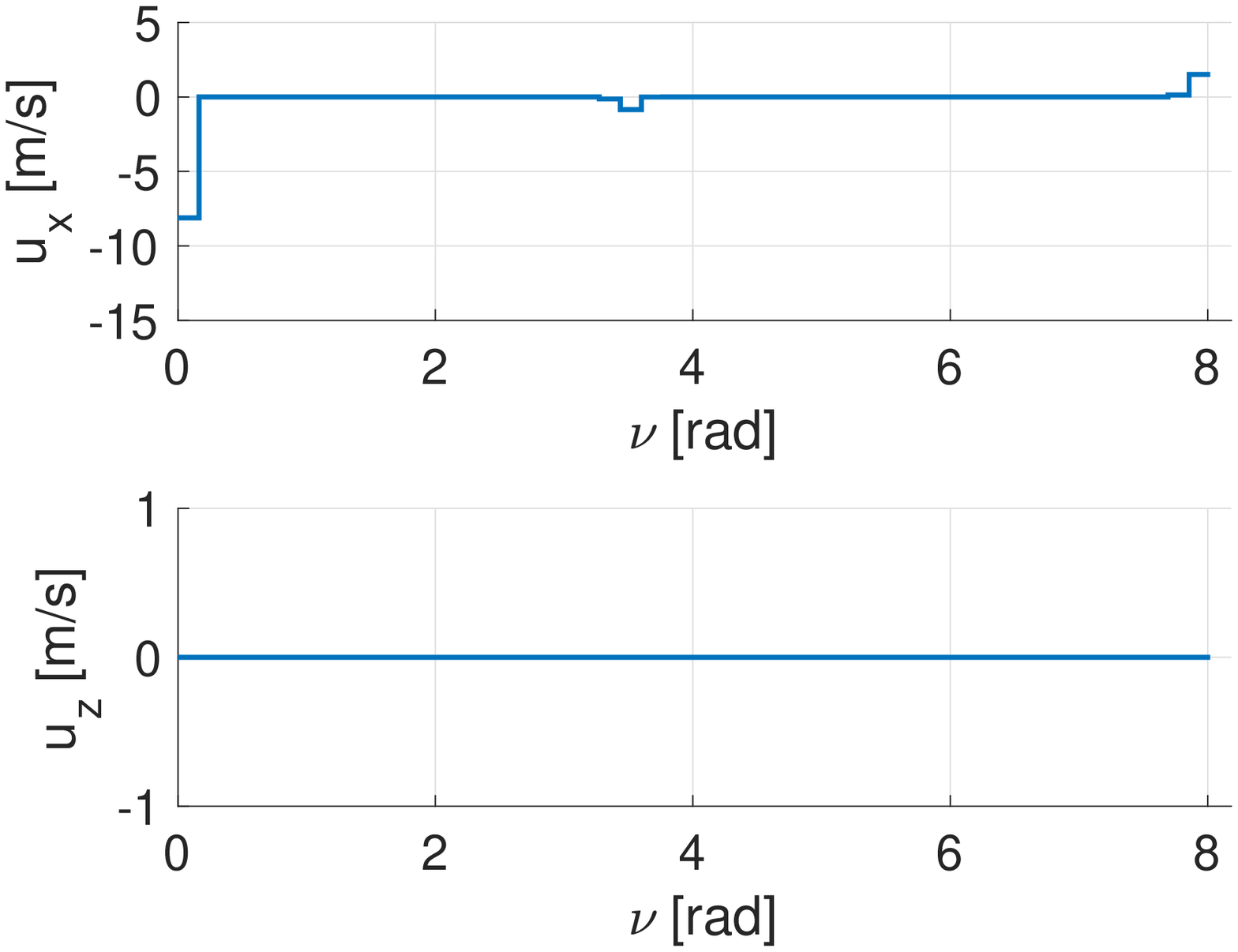}
\caption{IRLS algorithm for solving $\ell_2/\ell_1$ optimization problem for inplane ATV mission (N=50) }
\label{fig:IRLS_algorithm_l2/l1_atv}
\end{figure}


\section{Conclusion} \label{sec:conclusions}

In this paper, two iterative solution approaches are presented for the computation of approximate solutions to the minimum-fuel rendezvous problem assuming thrust vectoring and orthogonal vectoring control modules. The proposed techniques, which are both based on the Iteratively Reweighted Least Squares algorithm from compressive (or compressed) sensing, compute control sequences with minimum $\ell_2/\ell_1$ norm and $\ell_1$ norm for the thrust vectoring and orthogonal vectoring cases, respectively. In the proposed problem formulation, the dynamics of the relative spacecraft rendezvous is expressed in the Local-Vertical-Local-Horizontal frame in which true anomaly $\nu$ is taken to be the independent variable instead of time $t$. One of the main reasons for using true anomaly as {the independent variable has to do with the fact that it takes values in a compact interval which is significantly shorter than the corresponding time interval and thus it is more suitable for discretization purposes (e.g., less reliance of results on the sampling period) and control design (which requires the solution of smaller size optimization problems)}. Numerical simulations were performed to validate the proposed algorithms. The computed state trajectories converged to their desired final values and in addition, the values of the $\ell_2/\ell_1$ norm and $\ell_1$ norm of the corresponding control sequences were close to the optimal values. The main motivation for using an approach that is based on the Iteratively Reweighted Least Squares algorithm over other approaches proposed in the literature is mainly its simplicity, which allows its straightforward implementation even by non-experts, and the small computational cost and short execution time. It is argued that the proposed approach can be reliably executed onboard spacecraft with limited computational resources performing proximity operations. In future work, modifying the IRLS algorithm to incorporate control constraints for spacecraft rendezvous applications will be explored.
 
\bibliography{main}

\begin{thebibliography}{33}
\newcommand{\enquote}[1]{``#1''}
\providecommand{\natexlab}[1]{#1}
\providecommand{\url}[1]{\texttt{#1}}
\providecommand{\urlprefix}{URL }
\expandafter\ifx\csname urlstyle\endcsname\relax
  \providecommand{\doi}[1]{\discretionary{}{}{}https://doi.org/#1}\else
  \providecommand{\doi}[1]{\discretionary{}{}{}\urlstyle{rm}\url{https://doi.org/#1}}\fi

\bibitem[{Chamberlin and Rose(1964)}]{chamberlin1964gemini_1}
Chamberlin, J.~A., and Rose, J.~T., \enquote{Gemini rendezvous program,}
  \emph{Journal of Spacecraft and Rockets}, Vol.~1, No.~1, 1964, pp. 13--18.
\newblock \doi{10.2514/3.27585}.

\bibitem[{Burton and Hayes(1966)}]{burton1966gemini_2}
Burton, J., and Hayes, W., \enquote{Gemini rendezvous,} \emph{Journal of
  Spacecraft and Rockets}, Vol.~3, No.~1, 1966, pp. 145--147.
\newblock \doi{10.2514/6.1964-641}.

\bibitem[{Goodman(2012)}]{goodman2006history}
Goodman, J.~L., \enquote{History of space shuttle rendezvous and proximity
  operations,} \emph{Journal of Spacecraft and Rockets}, Vol.~43, No.~5, 2012,
  pp. 944--959.
\newblock \doi{10.2514/1.19653}.

\bibitem[{Bakolas(2019)}]{sparse_bakolas2019computation}
Bakolas, E., \enquote{On the Computation of Sparse Solutions to the
  Controllability Problem for Discrete-Time Linear Systems,} \emph{Journal of
  Optimization Theory and Applications}, Vol. 183, No.~1, 2019, pp. 292--316.
\newblock \doi{10.1007/s10957-019-01532-9}.

\bibitem[{Leomanni et~al.(2019)Leomanni, Bianchini, Garulli, Giannitrapani, and
  Quartullo}]{leomanni2019sum}
Leomanni, M., Bianchini, G., Garulli, A., Giannitrapani, A., and Quartullo, R.,
  \enquote{Sum-of-Norms Model Predictive Control for Spacecraft Maneuvering,}
  \emph{IEEE Control Systems Letters}, Vol.~3, No.~3, 2019, pp. 649--654.

\bibitem[{Zinage and Bakolas(2021)}]{zinage2021far}
Zinage, V., and Bakolas, E., \enquote{Far-Field Minimum-Fuel Spacecraft
  Rendezvous using Koopman Operator and l 2/l 1 Optimization,} \emph{2021
  American Control Conference (ACC)}, IEEE, 2021, pp. 2992--2997.

\bibitem[{Massioni et~al.(2011)Massioni, Ankersen, and
  Verhaegen}]{massioni2011matching_pursuit}
Massioni, P., Ankersen, F., and Verhaegen, M., \enquote{A matching pursuit
  algorithm approach to chaser-target formation flying problems,} \emph{IEEE
  Transactions on Control Systems Technology}, Vol.~20, No.~2, 2011, pp.
  513--519.
\newblock \doi{10.1109/TCST.2011.2130526}.

\bibitem[{Lion and Handelsman(2012)}]{lion1968primer_calculus_of_variations}
Lion, P., and Handelsman, M., \enquote{Primer vector on fixed-time impulsive
  trajectories.} \emph{AIAA Journal}, Vol.~6, No.~1, 2012, pp. 127--132.
\newblock \doi{10.2514/3.4452}.

\bibitem[{Arzelier et~al.(2013)Arzelier, Louembet, Rondepierre, and
  Kara-Zaitri}]{arzelier2013new_conv4}
Arzelier, D., Louembet, C., Rondepierre, A., and Kara-Zaitri, M., \enquote{A
  new mixed iterative algorithm to solve the fuel-optimal linear impulsive
  rendezvous problem,} \emph{Journal of Optimization Theory and Applications},
  Vol. 159, No.~1, 2013, pp. 210--230.
\newblock \doi{10.1007/s10957-013-0282-z}.

\bibitem[{PRUSSING(1969)}]{prussing1969illustration}
PRUSSING, J.~E., \enquote{Illustration of the primer vector in time-fixed,
  orbit transfer.} \emph{AIAA Journal}, Vol.~7, No.~6, 1969, pp. 1167--1168.
\newblock \doi{10.2514/3.5297}.

\bibitem[{Prussing(1969)}]{prussing1969optimal_conv2}
Prussing, J.~E., \enquote{Optimal four-impulse fixed-time rendezvous in the
  vicinity of a circular orbit.} \emph{AIAA Journal}, Vol.~7, No.~5, 1969, pp.
  928--935.
\newblock \doi{10.2514/3.5246}.

\bibitem[{Carter and Brient(1995)}]{carter1995linearized_conv3}
Carter, T., and Brient, J., \enquote{Linearized impulsive rendezvous problem,}
  \emph{Journal of Optimization Theory and Applications}, Vol.~86, No.~3, 1995,
  pp. 553--584.
\newblock \doi{10.1007/BF02192159}.

\bibitem[{Lu and Liu(2013)}]{ping_lu2013autonomous_cone_2}
Lu, P., and Liu, X., \enquote{Autonomous trajectory planning for rendezvous and
  proximity operations by conic optimization,} \emph{Journal of Guidance,
  Control, and Dynamics}, Vol.~36, No.~2, 2013, pp. 375--389.
\newblock \doi{10.2514/1.58436}.

\bibitem[{Liu and Lu(2013)}]{liu2013robust}
Liu, X., and Lu, P., \enquote{Robust trajectory optimization for highly
  constrained rendezvous and proximity operations,} \emph{AIAA Guidance,
  Navigation, and Control (GNC) Conference}, 2013, p. 4720.
\newblock \doi{10.2514/6.2013-4720}.

\bibitem[{Liu and Lu(2014)}]{ping_liu2014solving}
Liu, X., and Lu, P., \enquote{Solving nonconvex optimal control problems by
  convex optimization,} \emph{Journal of Guidance, Control, and Dynamics},
  Vol.~37, No.~3, 2014, pp. 750--765.
\newblock \doi{10.2514/1.62110}.

\bibitem[{Dueri et~al.(2017)Dueri, Acıkmese, Scharf, and
  Harris}]{p:acikmese2017}
Dueri, D., Acıkmese, B., Scharf, D.~P., and Harris, M.~W., \enquote{Customized
  Real-Time Interior-Point Methods for Onboard Powered-Descent Guidance,}
  \emph{Journal of Guidance, Control, and Dynamics}, Vol.~40, No.~2, 2017, pp.
  197--212.
\newblock \doi{10.2514/1.G001480}.

\bibitem[{Karlgaard(2006)}]{huber_karlgaard2006robust_huber}
Karlgaard, C.~D., \enquote{Robust rendezvous navigation in elliptical orbit,}
  \emph{Journal of Guidance, Control, and Dynamics}, Vol.~29, No.~2, 2006, pp.
  495--499.
\newblock \doi{10.2514/1.19148}.

\bibitem[{Singla et~al.(2006)Singla, Subbarao, and
  Junkins}]{adpative_singla2006adaptive}
Singla, P., Subbarao, K., and Junkins, J.~L., \enquote{Adaptive output feedback
  control for spacecraft rendezvous and docking under measurement uncertainty,}
  \emph{Journal of Guidance, Control, and Dynamics}, Vol.~29, No.~4, 2006, pp.
  892--902.
\newblock \doi{10.2514/1.17498}.

\bibitem[{D’Amico et~al.(2013)D’Amico, Ardaens, Gaias, Benninghoff,
  Schlepp, and J{\o}rgensen}]{d_amico_2013noncooperative}
D’Amico, S., Ardaens, J.-S., Gaias, G., Benninghoff, H., Schlepp, B., and
  J{\o}rgensen, J., \enquote{Noncooperative rendezvous using angles-only
  optical navigation: system design and flight results,} \emph{Journal of
  Guidance, Control, and Dynamics}, Vol.~36, No.~6, 2013, pp. 1576--1595.
\newblock \doi{10.2514/1.59236}.

\bibitem[{Youmans and Lutze(1998)}]{neural_youmans1998neural}
Youmans, E.~A., and Lutze, F.~H., \enquote{Neural network control of space
  vehicle intercept and rendezvous maneuvers,} \emph{Journal of Guidance,
  Control, and Dynamics}, Vol.~21, No.~1, 1998, pp. 116--121.
\newblock \doi{10.2514/2.4206}.

\bibitem[{Gao et~al.(2009)Gao, Yang, and Shi}]{gao2009multi_h_infinity}
Gao, H., Yang, X., and Shi, P., \enquote{Multi-objective robust $H_\infty$
  Control of spacecraft rendezvous,} \emph{IEEE Transactions on Control Systems
  Technology}, Vol.~17, No.~4, 2009, pp. 794--802.
\newblock \doi{10.1109/TCST.2008.2012166}.

\bibitem[{Yao et~al.(2010)Yao, Xie, and He}]{yao2010flyaround_relative}
Yao, Y., Xie, R., and He, F., \enquote{Flyaround orbit design for autonomous
  rendezvous based on relative orbit elements,} \emph{Journal of Guidance,
  Control, and Dynamics}, Vol.~33, No.~5, 2010, pp. 1687--1692.
\newblock \doi{10.2514/1.48494}.

\bibitem[{Luo et~al.(2007)Luo, Tang, and Lei}]{luo2007optimal}
Luo, Y.-Z., Tang, G.-J., and Lei, Y.-J., \enquote{Optimal multi-objective
  linearized impulsive rendezvous,} \emph{Journal of Guidance, Control, and
  Dynamics}, Vol.~30, No.~2, 2007, pp. 383--389.
\newblock \doi{10.2514/1.21433}.

\bibitem[{Boyd and Vandenberghe(2004)}]{boyd2004convex}
Boyd, S., and Vandenberghe, L., \emph{Convex optimization}, Cambridge
  university press, 2004.

\bibitem[{Grant and Boyd(2014)}]{grant2014cvx_boyd}
Grant, M., and Boyd, S., \enquote{CVX: Matlab software for disciplined convex
  programming, version 2.1,} , 2014.

\bibitem[{Foucart and Rauhut(2013)}]{b:Foucart2013}
Foucart, S., and Rauhut, H., \emph{A Mathematical Introduction to Compressive
  Sensing}, Birkhauser Basel, 2013.
\newblock \doi{10.1007/978-0-8176-4948-7}.

\bibitem[{Daubechies et~al.(2010)Daubechies, DeVore, Fornasier, and
  G{\"u}nt{\"u}rk}]{robust_efficient_irls_2010iteratively}
Daubechies, I., DeVore, R., Fornasier, M., and G{\"u}nt{\"u}rk, C.~S.,
  \enquote{Iteratively reweighted least squares minimization for sparse
  recovery,} \emph{Communications on Pure and Applied Mathematics: A Journal
  Issued by the Courant Institute of Mathematical Sciences}, Vol.~63, No.~1,
  2010, pp. 1--38.
\newblock \doi{10.1002/cpa.20303}.

\bibitem[{Candes and Tao(2005)}]{candes2005decoding_irls}
Candes, E.~J., and Tao, T., \enquote{Decoding by linear programming,}
  \emph{IEEE Transactions on Information Theory}, Vol.~51, No.~12, 2005, pp.
  4203--4215.
\newblock \doi{10.1109/TIT.2005.858979}.

\bibitem[{Yamanaka and Ankersen(2002)}]{yamanaka2002new_dynamics_state_trans}
Yamanaka, K., and Ankersen, F., \enquote{New state transition matrix for
  relative motion on an arbitrary elliptical orbit,} \emph{Journal of Guidance,
  Control, and Dynamics}, Vol.~25, No.~1, 2002, pp. 60--66.
\newblock \doi{10.2514/2.4875}.

\bibitem[{Wang et~al.(2013)Wang, Wang, and Xu}]{block_wang2013recovery}
Wang, Y., Wang, J., and Xu, Z., \enquote{On recovery of block-sparse signals
  via mixed $\ell_2/\ell_q (0< q\leq 1)$ norm minimization,} \emph{EURASIP
  Journal on Advances in Signal Processing}, Vol. 2013, No.~1, 2013, p.~76.

\bibitem[{Zhou et~al.(2011)Zhou, Lin, and Duan}]{zhou2011lyapunov_gto}
Zhou, B., Lin, Z., and Duan, G.-R., \enquote{Lyapunov differential equation
  approach to elliptical orbital rendezvous with constrained controls,}
  \emph{Journal of Guidance, Control, and Dynamics}, Vol.~34, No.~2, 2011, pp.
  345--358.
\newblock \doi{10.2514/1.52372}.

\bibitem[{Arzelier et~al.(2016)Arzelier, Br{\'e}hard, Deak, Joldes, Louembet,
  Rondepierre, and Serra}]{arzelier2016linearized_compare}
Arzelier, D., Br{\'e}hard, F., Deak, N., Joldes, M., Louembet, C., Rondepierre,
  A., and Serra, R., \enquote{Linearized impulsive fixed-time fuel-optimal
  space rendezvous: A new numerical approach,} \emph{International Federation
  of Automatic Control}, Vol.~49, No.~17, 2016, pp. 373--378.
\newblock \doi{10.1016/j.ifacol.2016.09.064}.

\bibitem[{Labourdette et~al.(2008)Labourdette, Julien, Chemama, and
  Carbonne}]{labourdette2008atv}
Labourdette, P., Julien, E., Chemama, F., and Carbonne, D., \enquote{ATV Jules
  Verne mission maneuver plan,} \emph{International Symposium on Space Flight
  Dynamics, Toulouse, France}, 2008.

\end{thebibliography}

\end{document}